\newtheorem{theorem}{Theorem}
\newtheorem{lemma}{Lemma}
\newtheorem{proposition}{Proposition}
\newcommand{\be}{\begin{equation}}
\newcommand{\ee}{\end{equation}}
\newcommand{\bea}{\begin{eqnarray}}
\newcommand{\eea}{\end{eqnarray}}
\newcommand{\beann}{\begin{eqnarray*}}
\newcommand{\eeann}{\end{eqnarray*}}
\DeclareMathOperator{\tr}{Tr}
\newcommand{\expec}[1]{\left\langle #1 \right\rangle}
\DeclareMathOperator{\e}{e}
\title{\bf Long-lived mesoscopic entanglement between two damped infinite harmonic chains}
\author{F. Benatti$^{a,b}$, F. Carollo$^{a,b}$
R. Floreanini$^{b}$, J. Surace$^{c}$\\
\\
\small ${}^a$Dipartimento di Fisica, Universit\`a di Trieste, 
34151 Trieste, Italy\\
\small ${}^b$Istituto Nazionale di Fisica Nucleare, Sezione di Trieste,
34151 Trieste, Italy\\
\small ${}^c$Department of Physics and SUPA, University of Strathclyde,\\ 
\small Glasgow G4 0NG, United Kingdom}
\date{\null}
\begin{document}

\maketitle

\begin{abstract}
\noindent
We consider two chains, each made of $N$ independent oscillators, immersed in a common thermal
bath and study the dynamics 
of their mutual quantum correlations in the thermodynamic, large-$N$ limit. 
We show that dissipation and noise due to the presence of the external environment are able to generate collective quantum correlations between the two chains at the mesoscopic level. The created collective quantum entanglement
between the two many-body systems turns out to be rather robust, surviving for asymptotically long times
even for non vanishing bath temperatures.
\end{abstract}

\vskip 1cm

\section{Introduction}

Many-body systems are quantum systems composed by a large number $N$ of elementary
constituents. Because of the multiplicity of involved elements,
the study of single particle properties is impractical;
physically measurable properties of such systems are instead collective observables,
{\it i.e.} observable involving all system degrees of freedom.

In general, such collective observables represent extensive properties
of the systems, growing indefinitely with the number $N$. Collective observables
need therefore to be normalized by suitable powers of $1/N$. Provided the
system density $N/V$ is kept fixed, $V$ being the system volume, these normalized
observables become independent from the number $N$, allowing one to work in the
so-called thermodynamic, large-$N$ limit \cite{Feynman}-\cite{Bratteli}.

Typical examples of collective observables are {\it mean-field} observables,
{\it i.e.} averages over all constituents of single particle quantities,
an example of which is the mean magnetization in spin systems.
Although the single particle observables behave as quantum,
mean-field observables show in general a classical behaviour
as the number $N$ of constituents increases,
thus becoming {\it macroscopic} observables.
The well-established mean-field approximation theory precisely
describes many-body systems at this macroscopic level.

Nevertheless, recently there have been studies reporting the observation of
quantum behaviours also in systems made of a large number of particles \cite{Julsgaard}-\cite{Purdy}; 
typically, these systems either involve Bose-Einstein condensates, namely thousands of ultracold atoms 
trapped in optical lattices \cite{Leggett1}-\cite{Lewenstein2}, 
hybrid atom-photon \cite{Haroche}-\cite{Gerry} 
or optomechanical systems made of micro-oscillators (cantilevers) \cite{Wallquist}-\cite{Bowen}.

Clearly, mean-field observables, being averaged quantities, scaling as $1/N$ for large $N$,
can not be used to explain quantum effects on such scales. However, other kinds of collective observables
have been introduced and studied in many-body systems \cite{Goderis1}-\cite{Verbeure-book}; 
in analogy with classical probability theory, they are called fluctuations.
They still involve all the degrees of freedom of man-body quantum systems, but scale as $1/\sqrt{N}$
as the number $N$ of constituents increases, retaining some quantum properties
even in the thermodynamic limit. Being half-way between the microscopic observables,
those describing single particles behaviours, and the macroscopic mean-field
observables, they can be called {\it mesoscopic} observables.

One of the most striking manifestation of quantum coherence is entanglement,
{\it i.e.} the possibility of creating correlations within a bipartite system
that have no classical counterparts \cite{Horodecki}. 
Although treated at the beginning as a mere
theoretical curiosity, entanglement has became nowadays a real physical resource,
allowing achievements in quantum information and communication
not possible with purely classical means \cite{Nielsen,Petritis}.

Entanglement is however a fragile resource, that can be rapidly destroyed by the presence
of an external environment, which acts as a source of noise and dissipation, 
commonly leading to decoherence effects \cite{Alicki1}-\cite{Chruscinski1}.
Nevertheless, an external environment can not only degrade quantum coherence,
but also generate it through a purely mixing mechanism. Indeed, it has been shown that,
under certain circumstances, two independent, non interacting systems can become entangled by the action
of a common bath in which they are immersed \cite{Plenio1}-\cite{Tamura}. 
In general, the standard way for entangling 
two quantum systems is to make them directly interact; a different possibility is instead
to put them in contact with a common external bath: the presence of the bath may in fact induce
an indirect coupling between the two systems able to generate entanglement.

This possibility have been explored in microscopic systems, made of two-level atoms
or oscillators. In view of the recent developments in optomechanics and ultracold atom
experiments, it is of great interest to study whether two many-body systems can
similarly get entangled by the action of the environment in which they are immersed.
Notice that, being entanglement an intrinsically quantum phenomenon, 
this can possibly occur only at the mesoscopic level, {\it i.e.}
through collective observables that retain a quantum character even in the thermodynamic limit.

The present investigation contributes to answering this question: we shall see that two
non-interacting systems, made of a collection of independent oscillators, and immersed in
a common bath, can become entangled at the level of mesoscopic fluctuations
through a purely mixing mechanism. Even more strikingly, in certain situations,
the created entanglement can persist for asymptotically long times and nonvanishing temperatures.%
\footnote{
Similar issues have been previously investigated in the case of spin chains, involving
finite dimensional algebras at each site \cite{Carollo1}-\cite{Carollo6}: the generalization to the case of oscillators
is non-trivial and requires the use of quite different mathematical tools.}

The next Section is dedicated to the theory of collective observables in many-body systems. 
Referring to the investigated system, we shall first discuss mean-field observables and analyze their classical behaviour in the thermodynamic limit. Then, we introduce and study
suitable fluctuation operators and discuss the so-called {\it quantum central limit} \cite{Verbeure-book}:
it allows to assimilate at the mesoscopic level fluctuations operators to suitable bosonic variables.

Our double chain system is assumed to be immersed in and weakly coupled to a external bath; its resulting open
dynamics is discussed in Section 3. It can be described in terms of a master equation of very general form,
expressed in terms of microscopic variables: it connects indirectly the two, otherwise independent chains. 
By a careful choice of fluctuation operators, it is then shown that the microscopic open dynamics 
induces at the mesoscopic level a dissipative time-evolution
for the bosonic variables corresponding to these fluctuations. This dynamics
is of {\it quasi-free} type~\cite{Alicki2}, sending Gaussian states into Gaussian states.

We then focus on bosonic variables corresponding to the mesoscopic limit of
suitable fluctuation operators of either one or the other of the two chains. 
By examining the time evolution of these
sets of mesoscopic bosonic variables, in Section 4 we shall show that indeed the two chains can get entangled
by the action of the bath. How the amount of generated entanglement depends on the bath temperature
and system-bath coupling parameter will also be discussed in detail. Remarkably, in certain situations, 
entanglement can persist for asymptotically long times even at nonvanishing temperature.

Finally, the Appendix contains proofs and technical calculations that can not be
accommodated in the main text.

\section{Many-body systems at the mesoscopic scale}

In this section, we shall see that the common wisdom that assigns a ``classical'' behaviour
to observable averages while a non-trivial dynamics to observable fluctuations holds also in the case
of quantum many-body systems. More specifically, mean-field observables will be shown to provide 
a classical (commutative) description of the system, typical of the ``macroscopic'' world, 
while fluctuations around observable averages still retain some quantum (noncommutative) properties: 
they describe the ``mesoscopic'' behaviour of the system, at a level that is half way between the
microscopic and macroscopic scales.

\subsection{Many-body oscillator system}

We shall study the dynamics of a many-body system made of two equal, one-dimensional chains, 
each composed of $N$ identical, independent oscillators.
Each site $k$ of the double chain system, $k=1,2,\ldots, N$, consists of a couple of harmonic oscillators:
they are described by the corresponding position $x^{[k]}_\alpha$ and momentum $p^{[k]}_\alpha$ operators, the index
$\alpha=1,2$ labelling the two chains; they obey standard canonical commutation relations:
\begin{equation}
\Big[ x^{[j]}_\alpha,\ p^{[k]}_\beta \Big]= i\,\delta_{jk}\ \delta_{\alpha\beta}\ ,\qquad j,k=1,2,\ldots,N\ ,\quad
\alpha, \beta=1,2\ .
\label{1.1}
\end{equation}
The oscillators are free and therefore their independent microscopic dynamics
is generated by the Hamiltonian:
\begin{equation}
H^{[k]}_\alpha = \frac{\omega}{2} \Big[ \big(x^{[k]}_\alpha\big)^2 + \big(p^{[k]}_\alpha\big)^2 \Big]\ ,
\label{1.2}
\end{equation}
with $\omega$ the oscillator frequency, taken to be the same for all sites. 
The system observables at site $k$ turn out to be
polynomials in the four variables $(x_1^{[k]},\ p_1^{[k]},\ x_2^{[k]},\ p_2^{[k]} )$, of which
the above Hamiltonian is just one example; recalling $(\ref{1.1})$,
all these polynomials form an algebra ${\cal A}^{[k]}$, that it is usually called the (double) oscillator algebra.
One can now take the union of all these algebras for $k=1, 2,\ldots, N$ and form the 
total algebra ${\cal A}^{(N)}$ containing the observables of the whole many-body system;
elements of ${\cal A}^{(N)}$ are polynomials in the $4N$ variables
$(x_\alpha^{[k]},\ p_\beta^{[k]})$, $k=1,2,\ldots, N$, $\alpha,\beta=1,2$. 
In particular, any element $O^{[k]}\in {\cal A}^{[k]}$ at site $k$ readily extends to an operator acting
on the whole system by simply making it act trivially on all sites except $k$.

As mentioned in the Introduction, the double chain system is assumed to be immersed
in a thermal bath; this is the most realistic situation encountered in actual experiments.
It is then reasonable to assume the system to be initially at thermal equilibrium
at the bath temperature $T\equiv 1/\beta$. The state of the double chain system can then
be described by the following density operator (Gibbs state):
\begin{equation}
\rho^{(N)}=\frac{e^{-\beta H^{(N)}}}{\tr\big[e^{-\beta H^{(N)}}\big]}\ ,
\label{1.3}
\end{equation}
where $H^{(N)}$ is the total Hamiltonian of the system, built from the single site ones in (\ref{1.2}):
\begin{equation}
H^{(N)}= \sum_{k=1}^N \sum_{\alpha=1}^2 H^{[k]}_\alpha\ .
\label{1.4}
\end{equation}
Since we are dealing with independent oscillators, the density matrix (\ref{1.3}) turns out to be the
product of single site density matrices:
\begin{equation}
\rho^{(N)}=\prod_{k=1}^N \rho^{[k]}\ ,\qquad 
\rho^{[k]}=\frac{e^{-\beta H^{[k]}}}{\tr\big[e^{-\beta H^{[k]}}\big]}\ ,\quad
H^{[k]}=\sum_{\alpha=1}^2 H^{[k]}_\alpha\ ,\quad k=1,2,\ldots, N\ .
\label{1.5}
\end{equation}
Given any observable $O$ of the many-body system, $O\in {\cal A}^{(N)}$, its expectation
value in the state (\ref{1.3}) can now be readily computed:
\begin{equation}
\langle O\rangle_N\equiv \tr\big[\rho^{(N)}\, O\big]\ .
\label{1.6}
\end{equation}
The thermal state (\ref{1.3}) possesses interesting properties. First, due to the translation 
invariance of the density operator $\rho^{(N)}$, averages of the same observable at different sites 
coincide:
\begin{equation}
\langle O^{[j]}\rangle_N = \langle O^{[k]}\rangle_N, \qquad j,k=1,2,\ldots,N\ .
\label{1.7}
\end{equation}
Indeed, the two averages above reduce to $\langle O^{[j]}\rangle_N=\tr[\rho^{[j]} O^{[j]}]$
and $\langle O^{[k]}\rangle_N=\tr[\rho^{[k]} O^{[k]}]$, respectively, and since both single-site states 
$\rho^{[j]}$, $\rho^{[k]}$ and operators $O^{[j]}$, $O^{[k]}$ have the same dependence
on the corresponding canonical variables $(x_\alpha^{[j]},\ p_\alpha^{[j]})$ and 
$(x_\alpha^{[k]},\ p_\alpha^{[k]})$, the two averages coincide. In other terms,
the mean value of single site operators are independent both from the site index and the chain length $N$;
to remark this fact, in the following we shall use the simpler notation:
\begin{equation}
\langle O^{[k]}\rangle_N \equiv \langle O\rangle, \qquad k=1,2,\ldots,N\ .
\label{1.8}
\end{equation}
Similarly, one sees that in the state (\ref{1.3}) there are no correlation between different sites;
given two single site operators $A$ and $B$, one finds:
\begin{equation}
\big\langle A^{[j]}\, B^{[k]} \big\rangle_N - \big\langle A^{[j]}\big\rangle_N 
\big\langle B^{[k]}\big\rangle_N
=\Big(\langle A\, B\rangle - \langle A\rangle \langle B\rangle\Big)\, \delta_{jk}\ .
\label{1.9}
\end{equation}
Actually, any $n$-point correlation function can be expressed in terms of the above two-point ones,
since the state $\rho^{(N)}$ is Gaussian \cite{Ferraro}-\cite{Petz}. In order to appreciate this point, it is convenient
to introduce Weyl operators. Let us first collect the position and momentum operators
at the various $N$ sites into the $4N$-vector $\vec R$ with components \hfill\break
$(x_1^{[1]}, p_1^{[1]}, x_2^{[1]}, p_2^{[1]}, \dots, x_1^{[N]}, p_1^{[N]}, x_2^{[N]}, p_2^{[N]})$ 
and define the Weyl operator as:
\begin{equation}
\widehat{W}({\vec v}\,)=\e^{i \vec v \cdot \vec R}\ ,\qquad \vec v \cdot \vec R \equiv \sum_{i=1}^{4N} v_i\, R_i\ ,
\label{1.10}
\end{equation}
with $\vec v$ a $4N$-vector of real coefficients. The operators $\widehat{W}({\vec v}\,)$ are unitary,
$\big[\widehat{W}({\vec v}\,)\big]^\dagger=\widehat{W}(-{\vec v}\,)=\big[\widehat{W}({\vec v}\,)\big]^{-1}$ 
forming the so-called Weyl algebra ${\cal W}$,
characterized by the following relation, direct consequence of the canonical commutations in (\ref{1.1}):
\begin{equation}
\widehat{W}({\vec v}_1)\, \widehat{W}({\vec v}_2)=
\widehat{W}({\vec v}_1+{\vec v}_2)\e^{-\frac{i}{2} {\vec v}_1 \cdot \hat\sigma \cdot {\vec v}_2}\ ,
\label{1.11}
\end{equation}
with $\hat\sigma$ a $4N\times 4N$ symplectic matrix, that takes the following block-diagonal form
\begin{equation}
\hat\sigma=\begin{bmatrix} 
\begin{matrix}0 & 1\\ -1 & 0\end{matrix} & & 0 \\ & \ddots & \\ 0 & & \begin{matrix}0 & 1 \\ -1 & 0
\end{matrix} 
\end{bmatrix}.
\label{1.12}
\end{equation}
Any element of the oscillator algebra ${\cal A}^{(N)}$ can be obtained by taking
derivatives of $\widehat{W}({\vec v}\,)$ with respects of the components of the coefficient vector $\vec v$,
so that the description of the system in terms of Weyl operators is physically
equivalent to that in terms of the canonical variables. Nevertheless, it is
preferable to deal with Weyl operators, since these are bounded operators, unlike
coordinate and momentum operators. Indeed the oscillator algebra $\mathcal{A}^{(N)}$
should be really identified with the closure of the Weyl algebra with respect to the so-called GNS-representation
based on the chosen state $\rho^{(N)}$ (for details, see \cite{Strocchi5,Thirring1,Bratteli}).
In this way, the algebra $\mathcal{A}^{(N)}$ contains only bounded operators; in the following,
when referring to the oscillator algebra, we will always mean the algebra $\mathcal{A}^{(N)}$
constructed in this way.

A state $\rho$ for the system is called Gaussian if the expectation of Weyl operators 
are in Gaussian form, namely:
\begin{equation}
\tr \big[ \rho\, \widehat{W}({\vec v}\,)\big]=\e^{-\frac{1}{2} ({\vec v} \cdot \Sigma \cdot {\vec v})}\ ,
\label{1.13}
\end{equation}
where $\Sigma$ is the covariance matrix, whose components $[\Sigma]_{ij}$ are defined through the
anticommutator of the different components $R_i$ of $\vec R$:
\begin{equation}
[\Sigma]_{ij}\equiv\frac{1}{2} \tr \Big[ \rho\, \{ R_i,\, R_j\} \Big]\ ,\qquad i,j=1,2,\ldots, 4N\ .
\label{1.14}
\end{equation}
For the thermal state $\rho^{(N)}$ in (\ref{1.3}), the covariance matrix is explicitly given by:
\begin{equation}
\Sigma^{(N)}=\frac{1}{2\eta}\, {\bf 1}_{4N}\ ,\qquad \eta=\tanh\bigg(\frac{\beta\omega}{2}\bigg)\ ,
\label{1.15}
\end{equation}
where the notation ${\bf 1}_n$ indicates the identity matrix in $n$-dimensions.
Since the covariance matrix is proportional to the unit matrix, 
the state $\rho^{(N)}$ exhibits no correlations between oscillators
belonging to different chains, even at the same site $k$;
the state is therefore completely separable, as also explicitly exhibited by its
product form in (\ref{1.5}).

\subsection{Mean-field observables}

We have so far discussed single-site operators, the ones that are needed for a microscopic
description of the double chain system. However, due to experimental limitations, these operators are hardly 
accessible in practice; only, collective observables, involving all system sites, are in general
available to experimental investigations. 

In order to move from a microscopic description to a one involving collective operators, potentially
defined over an infinitely long double chain, a suitable scaling needs to be chosen.
The simplest example of collective observables are {\it mean-field} operators, 
{\it i.e.} averages of $N$ copies of a same single site observable $X$:
\begin{equation}
X^{(N)}=\frac{1}{N}\sum_{k=1}^N X^{[k]}\ ;
\label{1.16}
\end{equation}
 we are interested in
studying their behaviour in the thermodynamic, large-$N$ limit.

Let us then consider two of such operators $X^{(N)}$ and $Y^{(N)}$, constructed from single site observables $X$ and $Y$,
respectively, and compute their commutator:
\begin{equation}
\Big[X^{(N)},\, Y^{(N)}\Big]= \frac{1}{N^2}\sum_{j,k=1}^N \Big[X^{[j]},\, Y^{[k]}\Big]=
\frac{1}{N^2}\sum_{k=1}^N \Big[X^{[k]},\, Y^{[k]}\Big]\ ,
\label{1.17}
\end{equation}
where the last equality comes from the fact that operators belonging to different sites commute,
$\big[X^{[j]},\, Y^{[k]}\big]=\delta_{jk}\, Z^{[k]}$,
where $Z^{[k]}\equiv\big[X^{[k]},\, Y^{[k]}\big]$ is an operator at site $k$. One then realizes that
the commutator of two mean-field operators is still a mean-field operator:
\begin{equation}
\Big[X^{(N)},\, Y^{(N)}\Big]= \frac{1}{N}\,  Z^{(N)}\ ;
\label{1.18}
\end{equation}
however, due to the $1/N$ factor, it vanishes in the large-$N$ limit, in any topology where the limit
of $Z^{(N)}$ exists. In other terms,
mean-field operators can provide only a ``classical'' description of many-body systems,
any quantum, non-commutative character being lost in the thermodynamic limit.

The above result actually holds in the so-called {\it weak operator topology} \cite{Bratteli}, {\it i.e.} under state average
(see Section 6.1 in the Appendix for details); this means that for any local elements $A$ and $B$ in the algebra ${\cal A}^{(N)}$,
{\it i.e.} with support only on a finite number of sites,
one finds:
\begin{equation}
\lim_{N\to \infty} \langle A\, X^{(N)}\, B \rangle_N = \langle X \rangle\, \langle AB\rangle\ ;
\label{1.19}
\end{equation}
as a consequence, the large-$N$ limit of $X^{(N)}$ is a scalar multiple of the identity operator,
and we can write
\begin{equation}
\lim_{N\to \infty} X^{(N)}= \langle X \rangle\, {\bf 1}\ .
\label{1.20}
\end{equation}
Therefore, mean-field observables describe what we can call ``macroscopic'', classical degrees of freedom;
although constructed in terms of microscopic operators, in the large-$N$ limit they do not retain
any fingerprint of quantum behaviour. Instead, as remarked in the Introduction, we are interested in
studying collective observables, {\it} involving all system sites, and nevertheless showing a quantum
character even in the thermodynamic limit. Clearly, a less rapid scaling than $1/N$ of (\ref{1.16}) is needed.

\subsection{Fluctuations}

Fluctuation operators are collective observables that scale as the square root of $N$ and represent a deviation
from their averages. Given any single-site operator $X$, and its copies $X^{[k]}$ attached to the $k$-th 
site of the system, its corresponding
fluctuation operator $F^{(N)}(X)$ is defined as
\begin{equation}
F^{(N)}(X)\equiv \frac{1}{\sqrt N}\sum_{k=1}^N \Big(X^{[k]}-\langle X \rangle\Big)\ ;
\label{1.21}
\end{equation}
it is the quantum analog of the fluctuation of a random variable in classical probability theory.
In particular, note that its mean value vanishes: $\langle F^{(N)}(X) \rangle_N=\,0$.

Although the scaling $1/\sqrt N$ does not in general guarantee convergence, it is easy to show that
the fluctuation (\ref{1.21}) nevertheless retain a quantum behaviour in the large-$N$ limit. 
Indeed, let us consider the commutator
of two such fluctuations; following the same steps leading to (\ref{1.18}), one gets:
\begin{equation}
\Big[F^{(N)}(X),\, F^{(N)}(Y)\Big]= \frac{1}{N}\sum_{k=1}^N \Big[X^{[k]},\, Y^{[k]}\Big]\equiv Z^{(N)}\ ,
\label{1.22}
\end{equation}
with $Z^{(N)}$ a mean-field operator. Therefore, in the large-$N$ limit the commutator of the two fluctuations
tend to a scalar multiple of the identity operator, $\langle Z\rangle\, {\bf 1}$. This result indicates
that, focusing on fluctuation operators, a non-commutative, bosonic algebraic structure emerges.

In order to construct and study this algebra, it is convenient to restrict the discussion and focus on the following
single site, hermitian operators:
\begin{eqnarray}
\nonumber
&& X_1=\frac{\sqrt{\eta}}{2}(x_1^2-p_1^2)\ , \hskip 2cm   X_2=\frac{\sqrt{\eta}}{2} \left(x_1  p_1+p_1 x_1\right)\ ,\\
\label{1.23}
&& X_3=\frac{\sqrt{\eta}}{2}(x_2^2-p_2^2)\ ,\hskip 2cm  X_4=\frac{\sqrt{\eta}}{2}\left( x_2 p_2+p_2 x_2 \right)\ ,\\
&& X_5=\sqrt{\frac{2}{\eta}}\, (x_1x_2-p_1p_2)\ ,\hskip 1.1cm   X_6 = \sqrt{\frac{2}{\eta}}\, (x_1 p_2+p_1 x_2)\ ,
\nonumber
\end{eqnarray}
with $\eta$ as in (\ref{1.15}), and on their corresponding six fluctuation operators
$F^{(N)}(X_\mu)$,\break \hbox{$\mu=1,2,\ldots,6$}. 
Given the real linear span $\mathcal{X}$ generated by the operators $X_\mu$,
\begin{equation}
\mathcal{X}=\Big\{X_r\ \big|\ X_r\equiv\vec{r}\cdot\vec{X}=\sum_{\mu=1}^6 r_\mu\, X_\mu,\ \vec{r}\in\mathbb{R}^6\Big\}\ ,
\label{1.24}
\end{equation}
one can further consider the fluctuations of the combination $X_r$, 
which, by the linearity of the definition (\ref{1.21}) assigning to any single-site operator its corresponding fluctuation,
can be rewritten as:
\begin{equation}
F^{(N)}(X_r)=\sum_{\mu=1}^6 r_\mu\, F^{(N)}(X_\mu)\equiv \vec{r}\cdot\vec{F}^{(N)}(X)\ .
\label{1.25}
\end{equation}
Let us then study the large $N$ behaviour of the Weyl-like operator
\begin{equation}
W^{(N)}(\vec{r}\,)=\e^{i \vec{r}\cdot\vec{F}^{(N)}(X) }\ ,
\label{1.26}
\end{equation}
which, unlike the fluctuations $F^{(N)}(X_\mu)$, is a bounded operator.
The large-$N$ limit of the average of $W^{(N)}(\vec{r}\,)$ in the chosen system state (\ref{1.3})
turns out to be of Gaussian form:

\begin{lemma}
Given the state $\rho^{(N)}$ in (\ref{1.3}), and the Weyl-like operator $W^{(N)}(\vec{r}\,)$ in (\ref{1.26}), one has:
$$
\lim_{N\to\infty}\big\langle W^{(N)}(\vec{r}\,)\big\rangle_N=e^{-\frac{1}{2}\vec{r}\cdot\Sigma_\beta\cdot \vec{r}}\, ,
$$
with 
\begin{equation}
\Sigma_\beta=\frac{\eta^2+1}{4\eta}\,{\bf 1}_{6}\, .
\label{1.27}
\end{equation}
\end{lemma}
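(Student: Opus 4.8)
The plan is to exploit two structural features of the setup that together collapse the quantum central limit into an essentially one–dimensional, classical computation: the thermal state is a product over sites, $(\ref{1.5})$, and single-site operators attached to different sites commute. Because $\vec r\cdot\vec F^{(N)}(X)=F^{(N)}(X_r)$ with $X_r=\sum_\mu r_\mu X_\mu$, and the copies $X_r^{[k]}$ act on distinct sites, the exponent splits into commuting pieces and the Weyl-like operator factorizes,
\[
W^{(N)}(\vec r)=\exp\!\Big(\tfrac{i}{\sqrt N}\sum_{k=1}^N\big(X_r^{[k]}-\langle X_r\rangle\big)\Big)=\prod_{k=1}^N\exp\!\Big(\tfrac{i}{\sqrt N}\big(X_r^{[k]}-\langle X_r\rangle\big)\Big).
\]
Taking the average in the product state and using site-independence $(\ref{1.8})$, this becomes an $N$-th power of a single-site factor,
\[
\big\langle W^{(N)}(\vec r)\big\rangle_N=\Big(\big\langle\exp\!\big(\tfrac{i}{\sqrt N}(X_r-\langle X_r\rangle)\big)\big\rangle\Big)^{N}.
\]
Since $W^{(N)}(\vec r)$ carries a single self-adjoint exponent, no genuinely noncommutative limit theorem is needed for this particular average; the noncommutativity of the $X_\mu$ among themselves enters only through the single-site state and would instead govern the emergent symplectic structure of the fluctuation algebra.

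Next I would carry out the central-limit expansion of the single-site factor. Writing $Y=X_r-\langle X_r\rangle$, whose single-site mean vanishes (indeed $\langle X_\mu\rangle=0$ for each operator in $(\ref{1.23})$, because $\langle x_\alpha^2\rangle=\langle p_\alpha^2\rangle$, $\langle\{x_\alpha,p_\alpha\}\rangle=0$, and all cross-chain two-point functions vanish), I Taylor expand
\[
\big\langle\exp(iY/\sqrt N)\big\rangle=1-\frac{1}{2N}\langle Y^2\rangle+O(N^{-3/2}),
\]
the linear term dropping out. Raising to the $N$-th power and letting $N\to\infty$ then gives
\[
\lim_{N\to\infty}\big\langle W^{(N)}(\vec r)\big\rangle_N=\exp\!\Big(-\tfrac12\langle Y^2\rangle\Big)=\exp\!\Big(-\tfrac12\,\vec r\cdot\Sigma_\beta\cdot\vec r\Big),
\]
where, because the antisymmetric part of $\langle\tilde X_\mu\tilde X_\nu\rangle$ is annihilated by the symmetric product $r_\mu r_\nu$, the matrix entries are the symmetrized single-site covariances $[\Sigma_\beta]_{\mu\nu}=\tfrac12\langle\{\tilde X_\mu,\tilde X_\nu\}\rangle$ with $\tilde X_\mu=X_\mu-\langle X_\mu\rangle$.

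It then remains to evaluate this $6\times6$ matrix. Here I would use that the single-site state is Gaussian with the diagonal covariance $(\ref{1.15})$, so $\langle x_\alpha^2\rangle=\langle p_\alpha^2\rangle=1/(2\eta)$, the cross-chain two-point functions vanish, and $\langle\{x_\alpha,p_\alpha\}\rangle=0$. Each $X_\mu$ is quadratic in the canonical variables, so every entry of $\Sigma_\beta$ is a four-point function, which Wick's theorem reduces to products of two-point functions. A convenient shortcut is to pass to ladder operators (e.g. $X_1\propto a_1^2+(a_1^\dagger)^2$ and $X_5\propto a_1 a_2+a_1^\dagger a_2^\dagger$) and use $\langle a^\dagger a\rangle=n$, $\langle a a^\dagger\rangle=n+1$ with $n+\tfrac12=1/(2\eta)$. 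One checks that every off-diagonal entry vanishes (each contributing contraction either leaves an unpaired ladder operator or cancels against its commutator partner), while all diagonal entries collapse to the common value $(\eta^2+1)/(4\eta)$, yielding $\Sigma_\beta=\frac{\eta^2+1}{4\eta}\,{\bf 1}_6$.

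The main obstacle is the rigorous justification of the limit of the $N$-th power, since the $X_\mu$ are unbounded and the naive operator-norm estimates that would control $\exp(iY/\sqrt N)$ are unavailable. The point to establish is that the Taylor remainder of the single-site factor is $O(N^{-3/2})$ uniformly, so that its contribution to $N\log\langle\exp(iY/\sqrt N)\rangle$ is $O(N^{-1/2})\to 0$; this holds because all moments $\langle Y^k\rangle$ are finite in the Gaussian state, and it is precisely the content of the quantum central limit theorem underlying the introduction of $(\ref{1.26})$. Once this control is in place, everything else is the bookkeeping of the Wick contractions.
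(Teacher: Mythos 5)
Your proposal follows essentially the same route as the paper: factorize $W^{(N)}(\vec r\,)$ over sites using the product structure of $\rho^{(N)}$, reduce the average to the $N$-th power of a single-site characteristic function, Taylor-expand to second order, and identify $\Sigma_\beta$ from the symmetrized single-site covariance $\frac12\langle\{\tilde X_\mu,\tilde X_\nu\}\rangle$, evaluated by Wick contractions. The only substantive point where your argument falls short of the paper's is the one you yourself flag: the control of the Taylor remainder. Saying that the remainder is $O(N^{-3/2})$ ``because all moments $\langle Y^k\rangle$ are finite'' is not sufficient, since the tail series $\sum_{k\ge3}\frac{|\langle Y^k\rangle|}{k!\,N^{k/2}}$ could fail to be dominated by its first term (or even to converge) if the moments grew too fast in $k$. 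The paper's Appendix {\sl Lemma 3} supplies exactly the missing estimate: since $Y$ is quadratic in the canonical variables and the state is Gaussian, Wick's theorem bounds the $k$-th moment by $C^k\sqrt{(4k-1)!!}$ (the double factorial counting the pairings), and the ratio test then shows the tail sum converges for $N$ large enough and is $O(N^{-3/2})$. With that lemma in hand your proof is the paper's proof; without it, the single genuinely technical step of the statement remains asserted rather than established. (Your explicit ladder-operator evaluation of $\Sigma_\beta$, which the paper compresses into ``direct evaluation gives (\ref{1.27})'', is a welcome addition and is correct for the intended normalization of the cross-chain operators $X_5$, $X_6$.)
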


\begin{proof}
Let us consider the expectation $\langle W^{(N)}(\vec{r}\, )\big\rangle_N$; recalling (\ref{1.7}) 
and expanding the exponential, one can write:
\begin{eqnarray}
\nonumber
\langle W^{(N)}(\vec{r}\,)\big\rangle_N &=& 
\expec{\prod_{k=1}^N \e^{\frac{i}{\sqrt{N}}\big(X_r^{[k]}-\expec{X_r}\big)}}_N=
\left(\expec{e^{\frac{i}{\sqrt{N}}\big(X_r-\expec{X_r})} }\right)^N=\\
&=&\Bigg(1+\frac{i}{\sqrt{N}}\big\langle{X_r-\expec{X_r}}\big\rangle 
+\frac{i^2}{2N}\expec{X_r^2-\expec{X_r}^2}+\expec{{\cal R}_r^{(N)}}\Bigg)^N\, ,
\label{1.28}
\end{eqnarray}
with 
\begin{equation}
{\cal R}_r^{(N)}=\sum_{k=3}^\infty\frac{i^k}{k!\,\big(\sqrt{N}\big)^k}\big(X_r-\expec{X_r}\big)^k\, .
\label{1.29}
\end{equation}
Using the results of {\sl Lemma 3} in Section 6.2 of the Appendix, 
one sees that for large $N$ the contribution ${\cal R}_r^{(N)}$
is such that
$$
\big|\!\expec{{\cal R}_r^{(N)}}\!\big|= O\left(N^{-3/2}\right)\ ,
$$
and therefore it is subdominant with respect to the other pieces in (\ref{1.28}).
As a result, 
$$
\lim_{N\to\infty}\big\langle W^{(N)}(\vec{r}\,)\big\rangle_N=
\lim_{N\to\infty}\left(1-\frac{\expec{X_r^2-\expec{X_r}^2}}{2N}\right)^N=
\e^{-\frac{1}{2}\vec{r}\cdot\Sigma_\beta\cdot\vec{r}}\, ,
$$
with the covariance matrix $\Sigma_\beta$ defined through the following identity:
\begin{equation}
\expec{X_r^2-\expec{X_r}^2}=\vec{r}\cdot\Sigma_\beta\cdot \vec{r}\ ,
\label{1.30}
\end{equation}
and direct evaluation gives (\ref{1.27}).
\end{proof}

Recalling the result (\ref{1.13}), the statement of this {\sl Lemma} suggests that in the large-$N$ limit 
the operators $W^{(N)}(\vec{r}\, )$ behave as true Weyl operators. In order to confirm this,
the analog of the algebraic relation (\ref{1.11}) should be proven; this is precisely the content of
the following {\sl Lemma}, whose proof can be found in Section 6.3 of the Appendix.

\begin{lemma}
Given the state $\rho^{(N)}$ in (\ref{1.3}), and two distinct single-site operators $X_{r_1}$, $X_{r_2}$ belonging
to the real linear span $\mathcal{X}$ in (\ref{1.24}), one has
$$
\lim_{N\to\infty}\expec{A^{(N)}\,\left(W^{(N)}(\vec{r_1})\, W^{(N)}(\vec{r_2})-
W^{(N)}(\vec{r_1}+\vec{r_2})\ \e^{-\frac{1}{2}\expec{\big[X_{r_1}, X_{r_2}\big]}}\right)}_N=0\ ,
$$
for any (bounded) element $A^{(N)}$
in the oscillator algebra ${\cal A}^{(N)}$.
\label{Weyl}
\end{lemma}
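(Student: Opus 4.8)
The plan is to mirror the computation that established the preceding Lemma, working directly on the averaged quantity rather than attempting an operator identity. Since all the $W^{(N)}(\vec r)$ are unitary, hence uniformly bounded in norm by $1$, and since local operators (those supported on finitely many sites) are norm-dense in ${\cal A}^{(N)}$, the operator $W^{(N)}(\vec r_1)W^{(N)}(\vec r_2)-W^{(N)}(\vec r_1+\vec r_2)\,e^{-\frac12\langle[X_{r_1},X_{r_2}]\rangle}$ is uniformly bounded in $N$, so it suffices to prove the claim for a fixed local $A^{(N)}\equiv A$ with support on sites $1,\dots,m$. The starting observation is that, because operators attached to different sites commute, the Weyl-like operators factorize over sites, $W^{(N)}(\vec r_\ell)=\prod_{k=1}^N e^{\,i a_\ell^{[k]}}$ with $a_\ell^{[k]}\equiv\frac{1}{\sqrt N}\big(X_{r_\ell}^{[k]}-\langle X_{r_\ell}\rangle\big)$, $\ell=1,2$; the product then regroups site by site into $W^{(N)}(\vec r_1)\,W^{(N)}(\vec r_2)=\prod_{k=1}^N e^{\,i a_1^{[k]}}e^{\,i a_2^{[k]}}$.

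Next I would insert this into the state average and use the product form (\ref{1.5}) of $\rho^{(N)}$. Splitting the sites into the finite support $\{1,\dots,m\}$ of $A$ and the remaining ``bulk'', the trace factorizes into a finite piece involving $A$ and the single-site factors on $1,\dots,m$, times the $(N-m)$-th power of the single-site average $\langle e^{\,i a_1}e^{\,i a_2}\rangle$ of one generic bulk site, all equal by translation invariance (\ref{1.7}). For the bulk factor I would expand to second order, using $\langle a_1\rangle=\langle a_2\rangle=0$, the variance identity (\ref{1.30}) for the symmetric contribution $\langle(a_1+a_2)^2\rangle$, and $\langle[a_1,a_2]\rangle=\frac1N\langle[X_{r_1},X_{r_2}]\rangle$ coming from the single-site commutator underlying (\ref{1.22}), to obtain
\[
\big\langle e^{\,i a_1}e^{\,i a_2}\big\rangle=1-\frac{1}{2N}\Big[(\vec r_1+\vec r_2)\cdot\Sigma_\beta\cdot(\vec r_1+\vec r_2)+\langle[X_{r_1},X_{r_2}]\rangle\Big]+O\big(N^{-3/2}\big).
\]
Raising this to the power $N-m$ and letting $N\to\infty$ yields $\exp\!\big[-\tfrac12(\vec r_1+\vec r_2)\cdot\Sigma_\beta\cdot(\vec r_1+\vec r_2)\big]\,e^{-\frac12\langle[X_{r_1},X_{r_2}]\rangle}$, while the finite piece tends to $\langle A\rangle$ because each fixed-site factor $e^{\,i a_\ell^{[k]}}\to{\bf 1}$.

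The antisymmetric contribution $-\frac{1}{2N}\langle[X_{r_1},X_{r_2}]\rangle$ is precisely the term absent from the average of a single Weyl-like operator, and it is the averaged analogue of the Weyl phase in (\ref{1.11}) that survives the limit. Indeed, the identical factorization applied to the right-hand side — equivalently, the preceding Lemma with a local insertion — gives
\[
\lim_{N\to\infty}\big\langle A\,W^{(N)}(\vec r_1+\vec r_2)\big\rangle_N=\langle A\rangle\,\exp\!\Big[-\tfrac12(\vec r_1+\vec r_2)\cdot\Sigma_\beta\cdot(\vec r_1+\vec r_2)\Big],
\]
so that multiplying by the c-number $e^{-\frac12\langle[X_{r_1},X_{r_2}]\rangle}$ reproduces exactly the limit of $\langle A\,W^{(N)}(\vec r_1)W^{(N)}(\vec r_2)\rangle_N$ found above. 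Subtracting the two limits shows that the averaged difference vanishes, which is the assertion.

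The main obstacle is the rigorous control of the third- and higher-order terms discarded in the bulk expansion. Because the $X_\mu$ in (\ref{1.23}) are quadratic in the canonical operators and therefore unbounded, one cannot use operator-norm bounds; the remainder must instead be estimated in the Gaussian state, which is exactly where the moment estimates of {\sl Lemma 3} (Section 6.2) enter, guaranteeing $|\langle{\cal R}\rangle|=O(N^{-3/2})$ uniformly and hence justifying the limit $\big(1-\frac cN+O(N^{-3/2})\big)^{N-m}\to e^{-c}$, in complete analogy with the passage leading to the preceding Lemma. A secondary point needing care is the reduction to local $A$: one must verify that approximating a general bounded $A^{(N)}$ by a local element produces an error controlled uniformly in $N$, which follows from the uniform bound $\|W^{(N)}(\vec r)\|\le 1$ together with the norm-density of the local algebras in ${\cal A}^{(N)}$.
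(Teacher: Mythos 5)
Your computational core is sound and closely parallels the paper's: the site-by-site factorization $W^{(N)}(\vec r_1)W^{(N)}(\vec r_2)=\prod_{k}e^{ia_1^{[k]}}e^{ia_2^{[k]}}$, the second-order expansion of the single-site factor that produces both the covariance term and the commutator term, and the appeal to the moment estimates of {\sl Lemma 3} to control the $O(N^{-3/2})$ remainder are all exactly the ingredients the paper uses. The gap lies in how you dispose of $A^{(N)}$. You reduce to a \emph{fixed} local $A$ supported on sites $1,\dots,m$ and invoke norm-density of local elements. But the lemma is stated, and crucially applied, to sequences $A^{(N)}\in\mathcal{A}^{(N)}$ whose support grows with $N$: in the proof of {\sl Theorem 1} and in the definition of the mesoscopic limit it is used with $A^{(N)}=W^{(N)}(\vec r_1)$, or products of such operators, which act nontrivially on all $N$ sites. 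A fixed local operator cannot approximate such a sequence in norm uniformly in $N$; for each $N$ the approximant would itself need $N$-dependent support, and then your splitting into a finite ``support of $A$'' piece converging to $\langle A\rangle$ plus a translation-invariant bulk no longer applies. As written, your argument therefore establishes the claim only for a strictly smaller class of $A^{(N)}$ than the statement requires.

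The paper sidesteps this entirely with a Cauchy--Schwarz step: setting $\Delta^{(N)}=W^{(N)}(\vec r_1)\,W^{(N)}(\vec r_2)-W^{(N)}(\vec r_1+\vec r_2)\,e^{-\frac12\langle[X_{r_1},X_{r_2}]\rangle}$, it bounds $\big|\langle A^{(N)}\Delta^{(N)}\rangle_N\big|\le\|A^{(N)}\|\,\big(\langle(\Delta^{(N)})^\dagger\Delta^{(N)}\rangle_N\big)^{1/2}$, so that $A^{(N)}$ enters only through its norm and its structure is irrelevant. Unitarity of the $W^{(N)}$'s reduces $\langle(\Delta^{(N)})^\dagger\Delta^{(N)}\rangle_N$ to $2-2\,\mathrm{Re}\big\{e^{-\frac12\langle[X_{r_1},X_{r_2}]\rangle}\big\langle W^{(N)}(-\vec r_2)W^{(N)}(-\vec r_1)W^{(N)}(\vec r_1+\vec r_2)\big\rangle_N\big\}$, and the triple product is then evaluated by precisely the single-site expansion you carried out, giving $e^{+\frac12\langle[X_{r_1},X_{r_2}]\rangle}$ and hence zero. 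If you replace your locality reduction by this Cauchy--Schwarz reduction, the rest of your computation goes through and yields the full statement.
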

\noindent
Notice that the single-site expectation $\big\langle[X_{r_1},X_{r_2}]\big\rangle$ can be easily computed:
\begin{equation}
\big\langle[X_{r_1},X_{r_2}]\big\rangle=\sum_{\mu\nu=1}^{6}(r_1)_\mu\, 
\big\langle\big[X_\mu,\, X_\nu \big]\big\rangle\,(r_2)_\nu\equiv
i\, \vec{r_1}\cdot
\sigma\cdot\vec{r_2}\ ,
\label{1.32}
\end{equation}
with the $6\times 6$ antisymmetric matrix $\sigma$ explicitly given by:
\begin{equation}
\big[\sigma\big]_{\mu\nu}=-i\big\langle\big[X_\mu,\, X_\nu \big]\big\rangle=
\begin{pmatrix}
 i\sigma_2&{ 0}&{ 0}\\
{0}&i\sigma_2& { 0}\\
{ 0}&{ 0}&i\sigma_2
\end{pmatrix}\ ,
\label{1.33}
\end{equation}
$\sigma_2$ being the standard second Pauli matrix. The real vector space $\mathcal{X}$ in (\ref{1.24}) is then endowed
with the symplectic matrix $\sigma$ and we can define on it the abstract Weyl algebra 
${\cal W}(\mathcal{X},\sigma)$, linearly generated by the Weyl operators $W(\vec{r}\,)$,
${\vec r}\in \mathbb{R}^6$, obeying the defining relations (compare with (\ref{1.11})):
\begin{equation}
\begin{split}
&W(\vec{r}_1)\, W(\vec{r}_2)=W(\vec{r}_1+\vec{r}_2)\ e^{-\frac{i}{2}\vec{r}_1\cdot\sigma\cdot\vec{r}_2}\ ,\\
&W^\dagger(\vec{r}\,)=W(-\vec{r}\,)\ .
\end{split}
\label{1.34}
\end{equation}
We can then say that in the large-$N$ limit the Weyl-like operator $W^{(N)}(\vec{r}\,)$ in (\ref{1.26})
yield the true Weyl operator $W(\vec{r}\,)$, element of the algebra ${\cal W}(\mathcal{X},\sigma)$.
The precise way in which this statement should be understood is provided by the following theorem:

\begin{theorem}
\label{th1}
Given the state $\rho^{(N)}$ and the real linear vector space $\mathcal{X}$ generated by 
the operators $X_\mu$ in (\ref{1.23}), one can define a Gaussian state $\Omega$ on the Weyl algebra
${\cal W}(\mathcal{X},\sigma)$ such that, for all $\vec{r}_i\in\mathbb{R}^6$, $i=1,2,\ldots,n$,
\begin{equation}
\lim_{N\to\infty}\Big\langle W^{(N)}(\vec{r}_1)\,W^{(N)}(\vec{r}_2)\,\cdots W^{(N)}(\vec{r}_n)\Big\rangle_N=
\Big\langle W(\vec{r}_1)\,W(\vec{r}_2)\,\cdots W(\vec{r}_n)\Big\rangle_\Omega\ ,
\label{1.35}
\end{equation}
with
\begin{equation}
\lim_{N\to\infty}\Big\langle W^{(N)}(\vec{r}\,)\Big\rangle_N=
\e^{-\frac{1}{2} \vec{r}\cdot\Sigma_\beta\cdot \vec{r}}=
\Big\langle W(\vec{r}\,)\Big\rangle_\Omega\ ,\qquad \forall\, \vec{r}\in\mathbb{R}^6\ .
\label{1.36}
\end{equation}
\end{theorem}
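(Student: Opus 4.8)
The plan is to define the limiting state $\Omega$ through its values on single Weyl operators and then to establish (\ref{1.35}) by induction on the number $n$ of factors, using \emph{Lemma 1} to anchor the induction and \emph{Lemma \ref{Weyl}} to reduce $n$ by one at each step. First I would take $\Omega$ to be the quasi-free (Gaussian) state on $\mathcal{W}(\mathcal{X},\sigma)$ fixed by $\langle W(\vec{r}\,)\rangle_\Omega=e^{-\frac{1}{2}\vec{r}\cdot\Sigma_\beta\cdot\vec{r}}$, which is precisely (\ref{1.36}). Such a state exists exactly when $\Sigma_\beta$ is an admissible covariance matrix for the symplectic form $\sigma$, i.e. when $\Sigma_\beta+\tfrac{i}{2}\sigma\ge 0$ as a Hermitian form on $\mathbb{C}^6$. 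Since, by (\ref{1.33}), $\sigma$ is built from three identical $2\times2$ symplectic blocks and, by (\ref{1.27}), $\Sigma_\beta$ is a multiple of ${\bf 1}_6$, this reduces within each block to $\tfrac{\eta^2+1}{4\eta}\ge\tfrac12$, equivalently $(\eta-1)^2\ge 0$, which holds for all $\eta\in(0,1]$ (with equality at $T=0$). Hence $\Omega$ is a bona fide Gaussian state, and its multi-point functions $\langle W(\vec{r}_1)\cdots W(\vec{r}_n)\rangle_\Omega$ are unambiguously fixed by collapsing the product to a single Weyl operator via (\ref{1.34}).

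The induction then runs as follows. For $n=1$, the statement (\ref{1.35}) is exactly \emph{Lemma 1} combined with the definition of $\Omega$. Assuming the claim for products of $n-1$ factors, I would set $A^{(N)}=W^{(N)}(\vec{r}_1)\cdots W^{(N)}(\vec{r}_{n-2})$, which is a product of unitaries and hence a uniformly bounded element of $\mathcal{A}^{(N)}$ with $\|A^{(N)}\|=1$. Applying \emph{Lemma \ref{Weyl}} to the last two factors, and using (\ref{1.32}) to write the correction phase as $e^{-\frac{1}{2}\langle[X_{r_{n-1}},X_{r_n}]\rangle}=e^{-\frac{i}{2}\vec{r}_{n-1}\cdot\sigma\cdot\vec{r}_n}$, gives
\[
\lim_{N\to\infty}\big\langle A^{(N)}\,W^{(N)}(\vec{r}_{n-1})\,W^{(N)}(\vec{r}_n)\big\rangle_N
= e^{-\frac{i}{2}\vec{r}_{n-1}\cdot\sigma\cdot\vec{r}_n}\,
\lim_{N\to\infty}\big\langle A^{(N)}\,W^{(N)}(\vec{r}_{n-1}+\vec{r}_n)\big\rangle_N .
\]
Note that this simultaneously proves that the left-hand limit exists and identifies its value, since the right-hand side is a product of $n-1$ Weyl-like operators to which the inductive hypothesis applies.

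By that hypothesis the right-hand side equals $e^{-\frac{i}{2}\vec{r}_{n-1}\cdot\sigma\cdot\vec{r}_n}\,\langle W(\vec{r}_1)\cdots W(\vec{r}_{n-2})\,W(\vec{r}_{n-1}+\vec{r}_n)\rangle_\Omega$. The abstract Weyl relation (\ref{1.34}) gives the operator identity $W(\vec{r}_{n-1}+\vec{r}_n)=e^{+\frac{i}{2}\vec{r}_{n-1}\cdot\sigma\cdot\vec{r}_n}\,W(\vec{r}_{n-1})\,W(\vec{r}_n)$, and substituting it cancels the two scalar phases, leaving exactly $\langle W(\vec{r}_1)\cdots W(\vec{r}_n)\rangle_\Omega$. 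Since $A^{(N)}\,W^{(N)}(\vec{r}_{n-1})\,W^{(N)}(\vec{r}_n)=W^{(N)}(\vec{r}_1)\cdots W^{(N)}(\vec{r}_n)$, this is precisely the $n$-factor version of (\ref{1.35}), closing the induction; the special value (\ref{1.36}) is recovered at $n=1$.

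The step requiring the most care is the application of \emph{Lemma \ref{Weyl}} with the $N$-dependent operator $A^{(N)}$: the lemma is stated for sequences in $\mathcal{A}^{(N)}$, and here the hypotheses are met uniformly because each $W^{(N)}(\vec{r}_i)$ is unitary, so $\|A^{(N)}\|=1$ for all $N$ and the scalar correction phase can be factored outside the limit. Apart from this, the only genuinely new ingredient beyond the two lemmas is the positivity check $\Sigma_\beta+\tfrac{i}{2}\sigma\ge 0$ guaranteeing that $\Sigma_\beta$ defines an honest state $\Omega$; everything else is the bookkeeping of symplectic phases, which matches automatically because the commutator appearing in \emph{Lemma \ref{Weyl}} is exactly $i\,\vec{r}_{n-1}\cdot\sigma\cdot\vec{r}_n$, the same phase generator carried by the defining relation (\ref{1.34}). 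In this sense the quantum central limit is already contained in \emph{Lemmas 1} and \ref{Weyl}, and the theorem is their algebraic closure over arbitrary finite products.
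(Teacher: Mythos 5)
Your proposal is correct and follows essentially the same route as the paper: the paper's own proof simply declares both limits to be ``direct consequences of the previous two Lemmas'' and then verifies that $\Omega$ is a well-defined state via $\Sigma_\beta+\frac{i}{2}\sigma\ge 0$, which is exactly the induction on $n$ (anchored by Lemma~1, reduced by Lemma~2 with the uniformly bounded unitary $A^{(N)}$) plus the positivity check that you spell out. Your version merely makes explicit what the paper leaves implicit, including the elementary reduction of the positivity condition to $(\eta-1)^2\ge 0$.
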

\medskip

\begin{proof}
Both limits are direct consequences of the previous two Lemmas. What it is left to check is to
ensure that the Gaussian state $\Omega$ on the algebra $\mathcal{W}(\mathcal{X},\sigma)$ 
is indeed a well defined state. First of all, it is normalized as easily seen by
setting $\vec{r}=0$ in (\ref{1.36}). Further, its positivity is guaranteed
by the following inequality connecting covariance and symplectic matrices \cite{Holevo}:
$$
\Sigma_\beta+\frac{i}{2}\sigma\ge 0\ .
$$
\end{proof}

Since $\Omega$ is a Gaussian state, it provides a regular representation \cite{Bratteli} 
of the Weyl algebra $\mathcal{W}(\mathcal{X},\sigma)$; this guarantees that one can write:
\begin{equation}
W(\vec{r}\,)={\rm e}^{i\,\vec{r}\cdot\vec{F}}\ ,\qquad 
\vec{r}\cdot\vec{F}=\sum_{\mu=1}^6\,r_\mu\,F_\mu\ ,
\label{1.37}
\end{equation}
where $F_\mu$ are collective field operators satisfying canonical commutation relations:
\begin{equation}
\left[\vec{r}_1\cdot\vec{F}\,,\,\vec{r}_2\cdot\vec{F}\right]=
i\,\vec{r}_1\cdot\sigma\cdot\vec{r}_2\ .
\label{1.38}
\end{equation}
Then, through (\ref{1.26}) and (\ref{1.36}), {\it i.e.}  
$\lim_{N\to\infty}\big\langle\e^{i \vec{r}\cdot\vec{F}^{(N)}(X) }\big\rangle_N= 
\big\langle\e^{i \vec{r}\cdot\vec{F}}\big\rangle_\Omega$,  one can identify
\begin{equation}
\lim_{N\to\infty}F^{(N)}(X_\mu)=F_\mu\ ,\qquad \mu=1,2,\ldots,6\ .
\label{1.39}
\end{equation}
Further, in view of the explicit form (\ref{1.33}) of the symplectic matrix $\sigma$, the components $F_\mu$
can be labelled as
\begin{equation}
\vec{F}=(\hat X_1,\hat P_1,\hat X_2,\hat P_2,\hat X_3,\hat P_3)\ ,
\label{1.40}
\end{equation}
with the $\hat X_i$ position- and $\hat P_i$ momentum-like operators, satisfying
$$
\left[\hat X_i,\hat P_j\right]=i\delta_{i,j}\ ,\qquad i,j=1,2,3\ ,
$$
as a consequence of (\ref{1.38}) above.
Recalling the definitions (\ref{1.23}), one sees that the couple $\hat X_1$, $\hat P_1$ are operators
pertaining to the first chain of oscillators, while $\hat X_2$, $\hat P_2$ to the second one. On the contrary,
$\hat X_3$, $\hat P_3$ are mixed operators belonging to both chains.
Further, one can show that any other single-site oscillator operator not belonging
to the linear span $\mathcal{X}$ give rise to fluctuation operators that in the large-$N$ limit
dynamically decouple from the six in (\ref{1.40}) (see later and \cite{Surace}); 
this is why we can limit the discussion to the chosen set (\ref{1.23}).

Notice that the state $\Omega$ is separable with respect to the three modes (\ref{1.40}):
its covariance matrix $\Sigma_\beta$ is diagonal, thus showing neither quantum nor classical
correlations.
Indeed, as in the case of $\rho^{(N)}$, the state $\Omega$ can be represented by a density matrix $\rho_\Omega$ 
in product form, $\rho_\Omega=\prod_{i=1}^3 \rho_\Omega^{(i)}$, with
$\rho_\Omega^{(i)}$ standard free oscillator Gaussian states
in the variables $\hat X_i$ and $\hat P_i$.

It should be stressed that the bosonic canonical variables $\hat X_i$, $\hat P_i$, $i=1,2,3$,
are collective operators, originating from the fluctuation operators $F^{(N)}(X_\mu)$
through the limiting procedure (\ref{1.39}), specified by the previous {\sl Theorem 1}.
They describe the behaviour of the double chain system at a level that is half way
between the microscopic world of single-site oscillators and the macroscopic realm
of mean-field observables.
At this intermediate, mesoscopic level, a collective quantum behaviour of the many-body system is still
permitted.

In this respect, the large-$N$ limit that allows to pass from the exponential 
(\ref{1.26}) of local fluctuations (\ref{1.21})
to the mesoscopic operators belonging to the Weyl algebra $\mathcal{W}(\mathcal{X},\sigma)$
can be called the {\it mesoscopic limit}. Indeed, observe that
by varying $\vec{r}_{1},\ \vec{r}_{2}\in \mathbb{R}^6$, the expectation values of the form
$\big\langle W(\vec{r}_1)\,O\,W(\vec{r}_2)\big\rangle_\Omega$ completely determine any
generic operator $O$ in the Weyl algebra $\mathcal{W}(\mathcal{X},\sigma)$: essentially,
they represent its corresponding matrix elements.%
\footnote{In more precise mathematical terms, the r.h.s of (\ref{1.41}) corresponds to the
matrix elements of the operator $\pi_\Omega(O)$ with respect to the two vectors
$\pi_\Omega\big(W(\vec{r}_1)\big) |\Omega\rangle$ and $\pi_\Omega\big(W(\vec{r}_2)\big) |\Omega\rangle$ 
in the GNS-representation of the Weyl algebra
$\mathcal{W}(\mathcal{X},\sigma)$ based on the state $\Omega$ \cite{Bratteli}. Since these vectors are dense in
the corresponding Hilbert space, those matrix elements completely define the operators $O$.}
Then, the convergence to $O$ of a sequence $O^{(N)}$ of linear combinations
of exponential operators $W^{(N)}(\vec{r}\,)$ can be given the following formal definition:

\bigskip\noindent
\textbf{Mesoscopic limit.} {\it 
Given a sequence of operators $O^{(N)}$, linear combinations of exponential operators $W^{(N)}(\vec{r}\,)$, 
we shall say that it possesses the mesoscopic limit $O$, writing
$$
m-\lim_{N\to\infty} O^{(N)}= O\ ,
$$
if and only if 
\begin{equation}
\lim_{N\to\infty} \Big\langle W^{(N)}(\vec{r}_1)\,O^{(N)}\,W^{(N)}(\vec{r}_2)\Big\rangle_N=
\Big\langle W(\vec{r}_1)\,O\,W(\vec{r}_2)\Big\rangle_\Omega\ ,
\label{1.41}
\end{equation}
for all $\vec{r}_{1},\ \vec{r}_{2}\in \mathbb{R}^6$.}
\bigskip

In the following, we shall examine the open dynamics of the double chain system; more precisely,
given a one-parameter family of microscopic dynamical maps $\Phi^{(N)}_t$ on the algebra ${\cal A}^{(N)}$,
we will study its action on the Weyl-like operators $W^{(N)}(\vec{r}\,)$,
the exponential of local fluctuations, in the limit of large $N$. In other terms, 
we shall look for the limiting {\it mesoscopic dynamics} $\Phi_t$ acting on the elements $W(\vec{r}\,)$ of the
Weyl algebra $\mathcal{W}(\mathcal{X},\sigma)$. In line with the previously introduced mesoscopic limit,
we can state the following definition:

\bigskip\noindent
\textbf{Mesoscopic dynamics.} {\it Given a family of one-parameter maps
$\Phi^{(N)}_t\, : {\cal A}^{(N)} \to {\cal A}^{(N)}$, we shall say that it gives the
mesoscopic limit $\Phi_t$ on the Weyl algebra $\mathcal{W}(\mathcal{X},\sigma)$,
$$
m-\lim_{N\to\infty} \Phi^{(N)}_t = \Phi_t\ ,
$$
if and only if
\begin{equation}
\lim_{N\to\infty} \Big\langle W^{(N)}(\vec{r}_1)\,\Phi^{(N)}_t\big[W^{(N)}(\vec{r}\,)\big]\, 
W^{(N)}(\vec{r}_2)\Big\rangle_N=
\Big\langle W(\vec{r}_1)\,\Phi_t\big[W(\vec{r}\,)\big]\, W(\vec{r}_2)\Big\rangle_\Omega\ ,
\label{1.42}
\end{equation}
for all $\vec{r},\ \vec{r}_{1},\ \vec{r}_{2}\in \mathbb{R}^6$.}

\section{Open system dynamics}

As mentioned in the Introduction, the double-chain system of oscillators is assumed to be immersed
in an external bath: this is the most common situation encountered in actual experiments performed
on many-body systems
that can never be thought of as completely isolated from their surroundings.
Because of the presence of the bath, the microscopic system dynamics can not be generated by
the oscillator Hamiltonian $H^{(N)}$ alone, as given in (\ref{1.4}); additional pieces accounting for
the dissipative and noisy effects induced by the environment ought to be present.

\subsection{Microscopic dissipative dynamics}

For a weakly coupled bath, standard techniques allow to obtain the master equation
generating the open dynamics of any microscopic observable $X$; 
it takes the Kossakowski-Lindblad form \cite{Alicki1}-\cite{Chruscinski1}:
\begin{equation}
\frac{d}{dt}X_t=\mathbb{L}^{(N)}\left[X_t\right]=i\left[H^{(N)},X\, _t\right]+\mathbb{D}^{(N)}\left[X_t\right]\ .
\label{2.1}
\end{equation}
Assuming the same bath coupling for all sites, we shall consider the dissipative part of the generator 
$\mathbb{L}^{(N)}$ of the following form:
\begin{equation}
\mathbb{D}^{(N)}\left[X\right]\equiv\sum_{k=1}^N \mathbb{D}^{[k]}[X]=\sum_{k=1}^N\sum_{\alpha,\beta=1}^{4} C_{\alpha\beta} 
\left( V^{[k]}_{\alpha} X V^{[k]}_{\beta}-\frac{1}{2}\left\{V^{[k]}_{\alpha}V^{[k]}_{\beta},\, X \right\}\right)\ ,
\label{2.2}
\end{equation}
where with $V^{[k]}$ we indicate the microscopic, site-$k$ operator-valued 
four-vector with components $(x_1^{[k]},\ p_1^{[k]},\ x_2^{[k]},\ p_2^{[k]} )$; the $4\times 4$
Kossakowski matrix $C$ with elements $C_{\alpha\beta}$ encodes the bath noisy properties and can be taken of the form:%
\footnote{The dissipative generator in (\ref{2.2}), with $C$ as (\ref{2.3}), (\ref{2.4}), is rather general 
and can be obtained through standard weak-coupling techniques \cite{Alicki1} starting from a microscopic 
system-environment interaction Hamiltonian of the form
$\sum_{k=1}^N\sum_{\alpha,\beta=1}^4 V_\alpha^{[k]} \otimes B_\alpha^{[k]}$, with
$B_\alpha^{[k]}$ suitable hermitian bath operators.}
\begin{equation}
{C}=
\left(
\begin{array}{c|c}
\mathbb{A}  & \mathbb{B} \\ \hline
 \mathbb{B}^\dagger & \mathbb{A}
\end{array}
\right),
\label{2.3}
\end{equation}
with
\begin{equation}
\mathbb{A}=\frac{1+\gamma}{2}
\begin{pmatrix}
1 & i\eta \\
-i\eta & 1
\end{pmatrix}\ ,
\qquad
\mathbb{B}=\lambda\, \mathbb{A}\ ,
\qquad
\gamma=\e^{-\beta\omega}\ ,\quad \eta=\tanh(\beta\omega/2)\ .
\label{2.4}
\end{equation}
The parameters $\gamma$ and $\eta$ contains the dependence on the bath temperature, while
$\lambda$ is a real constant that measures the bath induced statistical coupling between the two
chains of oscillators. The condition of complete positivity on the generated dynamics
requires $C$ to be positive semidefinite, which in turn gives
$\lambda^2\leq 1$.%
\footnote{Complete positivity is a condition more restrictive than simple positivity:
it needs to be enforced on any linear open dynamics in order for it to be physically consistent in
all situations; for more information and details, see \cite{Benatti-rev, Benatti-book}.}
The master equation (\ref{2.1}) with $\mathbb{D}^{(N)}$ as in (\ref{2.2}) generates a
one-parameter family of transformations mapping Gaussian states 
into Gaussian states \cite{Vanheuverzwijn,Benatti3}.
 
To appreciate the physical meaning of (\ref{2.2}), notice that 
the first two entries in ${V}^{[k]}$ refer to variables pertaining to the first chain,
while the remaining two to the second chain, so that the diagonal blocks ${\mathbb{A}}$ of the Kossakowski matrix
describe the evolution of the two chains independently interacting with the same bath; 
in absence of ${\mathbb{B}}$, the dynamics of the binary system would then be in product form.
Instead, the off-diagonal blocks ${\mathbb{B}}$ statistically couple the two chains,
and the strength of this coupling is essentially measured by the parameter $\lambda$.

Recalling the form of the Hamiltonian (\ref{1.4}) and that of $\mathbb{D}^{(N)}$ above, the dynamical generator
$\mathbb{L}^{(N)}$ in (\ref{2.1}) can be decomposed as
\begin{equation}
\mathbb{L}^{(N)}[X]=\sum_{k=1}^N \mathbb{L}^{[k]}[X]\equiv
\sum_{k=1}^N\bigg(i\left[H^{[k]},X\right]+\mathbb{D}^{[k]}\left[X\right]\bigg)\ ,
\label{2.6}
\end{equation}
where $\mathbb{L}^{[k]}$ acts only on site $k$. As a consequence, the dynamical
map $\Phi^{(N)}_t$ implementing the finite time evolution, formally
obtained from the master equation (\ref{2.1}) through exponentiation,
$\Phi^{(N)}_t=\e^{t\mathbb{L}^{(N)}}$,
does not create correlations between different sites; in other terms, given any system observable
in product form, $X=\prod_{k=1}^N X^{[k]}$, one has:
\begin{equation}
\e^{t\mathbb{L}^{(N)}}\left[\prod_{k=1}^N X^{[k]}\right]=\prod_{k=1}^N\, 
\e^{t\mathbb{L}^{[k]}}\left[X^{[k]}\right]\ .
\label{2.7}
\end{equation}
Further, one finds that the unitary dynamics generated by the Hamiltonian $H^{(N)}$ alone commutes
with the dissipative one generated by $\mathbb{D}^{(N)}$. Indeed, one easily checks that 
$$
\e^{it H^{(N)}}\, \mathbb{D}^{(N)}\, [X]\, \e^{-it H^{(N)}}=
\mathbb{D}^{(N)}\left[\e^{it H^{(N)}}\, X\, \e^{-it H^{(N)}}\right]\ ,
$$
so that
\begin{equation*}
\Phi^{(N)}_t[X]\equiv\e^{t\mathbb{L}^{(N)}} [X]=\e^{it H^{(N)}}\, \left(\e^{t\mathbb{D}^{(N)} }[X]\right)\,\e^{it H^{(N)}}= 
\e^{t\mathbb{D}^{(N)}}\left[\e^{it H^{(N)}}\,X\, \e^{it H^{(N)}}\right]\, .
\end{equation*}
Due to the presence of the dissipative part (\ref{2.2}), the dynamical maps $\Phi^{(N)}_t$
no longer form a group, but a semigroup of transformations, satisfying a forward in time
composition law, typical of irreversible time-evolutions:
\begin{equation}
\Phi^{(N)}_t \circ \Phi^{(N)}_s = \Phi^{(N)}_s \circ \Phi^{(N)}_t =\Phi^{(N)}_{t+s}\ ,\qquad \forall s,\, t\geq 0\ .
\label{2.8}
\end{equation}
In addition, it is worth noting that, due to unitality, {\it i.e.} $\Phi^{(N)}_t[{\bf 1}]={\bf 1}$, and complete positivity, 
the maps $\Phi_t^{(N)}$ obey Schwartz-positivity:
\begin{equation}
\Phi^{(N)}_t\big[X^\dag X\big]\,\geq\,\Phi^{(N)}_t\big[X^\dag\big]\,\Phi^{(N)}_t\big[X\big]\ .
\label{2.8-1}
\end{equation}
The family of maps $\{\Phi^{(N)}_t\}_{t\geq0}$ then defines a {\it quantum dynamical semigroup} \cite{Alicki1}.

Finally, notice that the thermal equilibrium state $\rho^{(N)}$ in (\ref{1.3})
is time-invariant under the dynamics implemented by $\mathbb{L}^{(N)}$, {\it i.e.}
$$
\expec{e^{t\mathbb{L}^{(N)}}\left[X\right]}_N=\expec{X}_N\ ,\qquad \forall X\in {\cal A}^{(N)}\ ,
$$
since $\tr\big(\rho^{(N)} \mathbb{L}^{(N)}[X]\big)=\,0$, a result that can be
checked by direct computation;%
\footnote{Indeed, passing from the Heisenberg to the Schr\"odinger picture through the duality
relation $\tr\big(\rho^{(N)} \mathbb{L}^{(N)}[X]\big)=\tr\big(\widetilde{\mathbb{L}}^{(N)}[\rho^{(N)}]\, X\big)$,
one easily proves that: $\widetilde{\mathbb{L}}^{(N)}[\rho^{(N)}]=\,0$.}
actually, since $\rho^{(N)}$ commutes with the Hamiltonian $H^{(N)}$, it is separately invariant 
for both the unitary and dissipative part of the evolution. Using all these information,
we shall now investigate what kind of time evolution the microscopic dynamical maps $\Phi^{(N)}_t$
induce on fluctuations in the large-$N$ limit, {\it i.e.} at the mesoscopic level.

\subsection{Mesoscopic dissipative dynamics}

In this Section we shall show that the mesoscopic dynamics emerging from the large-$N$ limit
of the time evolution $\Phi^{(N)}_t$, as specified by (\ref{1.42}),
is again a dissipative semigroup of maps $\Phi_t$
on the Weyl algebra $\mathcal{W}(\mathcal{X},\sigma)$, transforming Weyl operators 
into Weyl operators. Maps of this kind are called {\it quasi-free} and their
generic form is as follows \cite{Petz}-\cite{Demoen}:
\begin{equation}
\Phi_t\big[W(\vec{r}\,)\big]=\e^{f_t(\vec{r})}\, W(\vec{r}_t)\ ,
\label{2.9}
\end{equation}
with given time-dependent prefactor and parameters $\vec{r}_t$. In the present case, one finds 
\hbox{($T$ represents matrix transposition)}:
\begin{equation}
\vec{r}_t= \mathcal{M}_t{}^T\cdot \vec{r}\ ,\qquad \mathcal{M}_t=\e^{t {\cal L}}\ ,
\label{2.10}
\end{equation}
where the $6\times 6$ matrix $\mathcal{L}$ gives the action of the Kossakowski-Lindblad generator $\mathbb{L}^{(N)}$
on the fluctuations $F^{(N)}(X_\mu)$ of the six single site operators introduced in (\ref{1.23}):
\begin{equation}
\mathbb{L}^{(N)}\big[\vec{r}\cdot\vec{F}^{(N)}(X)\big]=\vec{r}\cdot\mathcal{L}\cdot \vec{F}^{(N)}(X)\ ;
\label{2.11}
\end{equation}
explicitly, one finds:
\begin{equation}
\mathcal{L}=(\gamma-1)\, {\bf 1}_6 + 2\omega\, \sigma + 
\frac{(\gamma-1)\lambda}{\sqrt2}
\begin{pmatrix}
 0&{ 0}&{ \bf 1}_2\\
0&{ 0}&{ \bf 1}_2\\
{ \bf 1}_2&{ \bf 1}_2&0
\end{pmatrix}\ ,
\label{2.12}
\end{equation}
with $\sigma$ as in (\ref{1.33}). Instead, the exponent of the prefactor can be cast in the following form:
\begin{equation}
f_t(\vec{r}_t)=-\frac{1}{2}\, \vec{r}_t\cdot \mathcal{K}_t \cdot \vec{r}_t\ ,\qquad
\mathcal{K}_t=\Sigma_\beta - \mathcal{M}_t\cdot \Sigma_\beta\cdot\mathcal{M}_t{}^T \ ,
\label{2.13}
\end{equation}
where $\Sigma_\beta$ is the covariance matrix in (\ref{1.27}). With these definitions, one can
state the following result, whose proof can be found in Section 6.5 below:

\begin{theorem}
Given the state $\rho^{(N)}$ in (\ref{1.3}), the real linear vector space $\mathcal{X}$ generated by 
the operators $X_\mu$ in (\ref{1.23}) and the corresponding 
Weyl-like operators $W^{(N)}(\vec{r}\,)=\e^{i \vec{r}\cdot\vec{F}^{(N)}(X) }$,
evolving in time with the semigroup of maps $\Phi^{(N)}_t\equiv\e^{t\mathbb{L}^{(N)}}$, generated by
$\mathbb{L}^{(N)}$ in (\ref{2.1})-(\ref{2.4}),
the mesoscopic limit
$$
m-\lim_{N\to\infty}\Phi^{(N)}_t\Big[W^{(N)}(\vec{r}\,)\Big]=\Phi_t\left[W(\vec{r}\,)\right]\ ,
$$
defines a Gaussian quantum dynamical semigroup $\left\{\Phi_t\right\}_{t\ge0}$ on the Weyl algebra 
of fluctuations $\mathcal{W}\left(\mathcal{X},\sigma\right)$, explicitly given by (\ref{2.9})-(\ref{2.13}).
\end{theorem}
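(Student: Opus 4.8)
The plan is to exploit the product structure of both the state and the dynamics in order to reduce the mesoscopic limit to a single-site computation, in the same spirit as the proof of Lemma 1. Since operators at different sites commute, the Weyl-like operator factorizes exactly as $W^{(N)}(\vec r)=\prod_{k=1}^N w^{[k]}(\vec r)$, and because every single-site average $\langle X_\mu\rangle$ vanishes we may take $w^{[k]}(\vec r)=\e^{iX_r^{[k]}/\sqrt N}$. As $\mathbb{L}^{(N)}=\sum_k\mathbb{L}^{[k]}$ and the corresponding maps act site-wise, the factorization property (\ref{2.7}) yields $\Phi^{(N)}_t[W^{(N)}(\vec r)]=\prod_{k=1}^N\e^{t\mathbb{L}^{[k]}}[w^{[k]}(\vec r)]$. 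Sandwiching between $W^{(N)}(\vec r_1)$ and $W^{(N)}(\vec r_2)$ and averaging in $\rho^{(N)}=\prod_k\rho^{[k]}$, everything factorizes over sites, so the left-hand side of the mesoscopic limit (\ref{1.42}) collapses to $\big(\langle w_1\,g_t\,w_2\rangle\big)^N$, where $g_t=\e^{t\mathbb{L}}[w(\vec r)]$ and the expectation is now the single-site thermal one.

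Next I would expand each factor to second order in $1/\sqrt N$. Using $\mathbb{L}[{\bf 1}]=0$ together with the linear action $\e^{t\mathbb{L}}[X_r]=X_{r_t}$ read off from (\ref{2.11}), which upon exponentiation gives $\vec r_t=\mathcal{M}_t{}^T\vec r$ with $\mathcal{M}_t=\e^{t\mathcal{L}}$ as in (\ref{2.10}), one finds $g_t={\bf 1}+\tfrac{i}{\sqrt N}X_{r_t}-\tfrac{1}{2N}\e^{t\mathbb{L}}[X_r^2]+O(N^{-3/2})$. Multiplying out $w_1 g_t w_2$ and averaging, the order $N^{-1/2}$ contributions vanish because all $\langle X_\mu\rangle=0$, and at order $N^{-1}$ one is left with $\langle w_1 g_t w_2\rangle=1-C/N+O(N^{-3/2})$, where $C$ is built from three quadratic self-terms and three pairwise cross-terms. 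The invariance of the single-site thermal state under $\mathbb{L}$, inherited from that of $\rho^{(N)}$ under $\mathbb{L}^{(N)}$, gives $\langle\e^{t\mathbb{L}}[X_r^2]\rangle=\langle X_r^2\rangle$, so the surviving time dependence enters only through $\vec r_t$ in the cross-terms. Taking the $N$-th power then gives $\lim_{N\to\infty}\big(\langle w_1 g_t w_2\rangle\big)^N=\e^{-C}$.

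It then remains to identify $\e^{-C}$ with $\langle W(\vec r_1)\,\Phi_t[W(\vec r)]\,W(\vec r_2)\rangle_\Omega$. Writing each single-site two-point function as the sum of its symmetric part $\vec a\cdot\Sigma_\beta\cdot\vec b$, with $\Sigma_\beta$ the covariance of Lemma 1, and its antisymmetric part $\tfrac{i}{2}\vec a\cdot\sigma\cdot\vec b$ given by the commutator (\ref{1.32}), I would complete the square on the symmetric contributions. They assemble into $\tfrac12(\vec r_1+\vec r_t+\vec r_2)\cdot\Sigma_\beta\cdot(\vec r_1+\vec r_t+\vec r_2)$ plus the leftover $\tfrac12\,\vec r\cdot(\Sigma_\beta-\mathcal{M}_t\Sigma_\beta\mathcal{M}_t{}^T)\cdot\vec r=\tfrac12\,\vec r\cdot\mathcal{K}_t\cdot\vec r$, while the antisymmetric terms reproduce exactly the phase generated by the Weyl relations (\ref{1.34}) inside $\langle W(\vec r_1)W(\vec r_t)W(\vec r_2)\rangle_\Omega$. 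Comparing with (\ref{1.36}) and the quasi-free ansatz (\ref{2.9}), this is precisely $\e^{f_t(\vec r)}\langle W(\vec r_1)W(\vec r_t)W(\vec r_2)\rangle_\Omega$ with $f_t(\vec r)=-\tfrac12\vec r\cdot\mathcal{K}_t\cdot\vec r$ and $\mathcal{K}_t=\Sigma_\beta-\mathcal{M}_t\Sigma_\beta\mathcal{M}_t{}^T$, which establishes the explicit form (\ref{2.9})--(\ref{2.13}).

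Finally I would verify the structural claims. The semigroup law $\Phi_t\circ\Phi_s=\Phi_{t+s}$ follows from $\mathcal{M}_t\mathcal{M}_s=\mathcal{M}_{t+s}$ together with the cocycle identity $\mathcal{K}_{t+s}=\mathcal{K}_s+\mathcal{M}_s\mathcal{K}_t\mathcal{M}_s{}^T$, both immediate from the definitions, and the Gaussian character is manifest since $\Phi_t$ sends Weyl operators into multiples of Weyl operators. Complete positivity, needed to call $\{\Phi_t\}$ a quantum dynamical semigroup, can be obtained either by inheritance, $\Phi_t$ being the mesoscopic limit of the completely positive maps $\Phi^{(N)}_t$, or by direct verification of the quasi-free condition, which here reduces to the monotonicity $\mathcal{M}_t(\Sigma_\beta+\tfrac{i}{2}\sigma)\mathcal{M}_t{}^T\le\Sigma_\beta+\tfrac{i}{2}\sigma$, a consequence of the dissipativity of $\mathcal{L}$ and of the positivity $\Sigma_\beta+\tfrac{i}{2}\sigma\ge0$ already invoked in Theorem \ref{th1}. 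I expect the genuine obstacle to lie not in this algebra but in the analytic step of the second paragraph: controlling the remainder $\e^{t\mathbb{L}}[\mathcal{R}_r^{(N)}]$ uniformly in $N$. Unlike in Lemma 1, the dissipative map is now interposed between the fluctuation exponential and the state, so one must show it sends the tail (\ref{1.29}) into operators whose sandwiched thermal moments remain $O(N^{-3/2})$; this rests on the fact that $\e^{t\mathbb{L}}$ maps degree-$n$ polynomials in the canonical variables into polynomials of degree at most $n$ with $t$-dependent but $N$-independent coefficients, whose moments are then bounded through Lemma 3.
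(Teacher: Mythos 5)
Your proposal is correct and follows essentially the same route as the paper's own proof in Section 6.5: factorization over sites, second-order expansion of each exponential with the tails controlled so that they contribute $O(N^{-3/2})$, identification of the surviving $1/N$ terms with the symmetric part $\Sigma_\beta$ and antisymmetric part $\tfrac{i}{2}\sigma$ of the single-site two-point functions, and the $N$-th power limit reproducing $e^{f_t(\vec r)}\langle W(\vec r_1)W(\vec r_t)W(\vec r_2)\rangle_\Omega$. You also correctly single out the control of $e^{t\mathbb{L}}\big[{\cal R}_r^{(N)}\big]$ as the genuine analytic obstacle, which is exactly what the paper's Lemma 4 in Section 6.4 supplies.
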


The mesoscopic evolution maps $\Phi_t$ are unital, {\it i.e.} they map the
identity operator into itself, as it can be easily checked by letting $\vec{r}=\,0$ in (\ref{2.9}).
In addition, they compose as a semigroup; indeed, for all $s,\ t\geq 0$,
\begin{eqnarray*}
\Phi_s \circ \Phi_t \big[ W(\vec{r}\,)\big]=&&\e^{-\frac{1}{2}\big(\vec{r}\cdot \mathcal{K}_t\cdot\vec{r}
+\vec{r}_t\cdot \mathcal{K}_s\cdot\vec{r}_t\big) }\ W\big((\vec{r}_t)_s\big)\\
=&& \e^{-\frac{1}{2}\big(\vec{r}\cdot \mathcal{K}_t\cdot\vec{r}
+\vec{r}\cdot \big(\mathcal{M}_t\cdot \mathcal{K}_s\cdot\mathcal{M}_t^T\big)\cdot\vec{r}\big) }\ W\big(\vec{r}_{t+s}\big)\\
=&&\e^{-\frac{1}{2}\, \vec{r}\cdot \mathcal{K}_{t+s}\cdot\vec{r}}\, W\big(\vec{r}_{t+s}\big)
=\Phi_{t+s}\Big[ W(\vec{r}\,) \Big]\ .
\end{eqnarray*}
Finally, the maps  $\Phi_t$ are completely positive, since the following condition is satisfied \cite{Demoen}:
\begin{equation}
\Sigma_\beta +\frac{i}{2}\, \sigma \geq
\mathcal{M}_t\cdot \Big( \Sigma_\beta +\frac{i}{2}\, \sigma\Big)\cdot\mathcal{M}_t{}^T\ .
\label{2.13-1}
\end{equation}
Indeed, in view of (\ref{1.39}), for any complex vector $\vec{r}\in\mathbb{C}^6$, one can write:
$$
\vec{r}^{\, *}\cdot\Big(\Sigma_\beta +\frac{i}{2}\, \sigma\Big)\cdot\vec{r}=
\lim_{N\to\infty}\omega\Big( F^{(N)}(\vec{r}^{\, *})\, F^{(N)}(\vec{r}\,) \Big)\ ,
$$
as the covariance $\Sigma_\beta$ and symplectic matrix $\sigma$ are the real and immaginary
part of the \hbox{large-$N$} limit of the correlation matrix 
$\omega\big( F^{(N)}(X_\mu)\, F^{(N)}(X_\nu) \big)$ \cite{Carollo3,Carollo6}. 
In addition, since $\omega$ is invariant under the
time evolution generated by $\mathbb{L}^{(N)}$, {\it i.e.} $\omega=\omega \circ \Phi^{(N)}_t$,
one further has:
\begin{eqnarray*}
&&\vec{r}^{\, *}\cdot\Big(\Sigma_\beta +\frac{i}{2}\, \sigma\Big)\cdot\vec{r}=
\lim_{N\to\infty}\omega\circ \Phi^{(N)}_t\Big[ F^{(N)}(\vec{r}^{\, *})\, F^{(N)}(\vec{r}\,) \Big]\\
&&\hskip 5cm \geq \lim_{N\to\infty}\omega\Big( \Phi^{(N)}_t\big[F^{(N)}(\vec{r}^{\, *})\big]\, \Phi^{(N)}_t\big[F^{(N)}(\vec{r}\,)\big] \Big)\ ,
\end{eqnarray*}
where the inequality is a consequence of Schwartz-positivity, see (\ref{2.8-1}). Recalling
(\ref{2.11}) and (\ref{2.10}), one finally writes:
$$
\lim_{N\to\infty}\omega\Big( \Phi^{(N)}_t\big[F^{(N)}(\vec{r}^{\, *})\big]\, \Phi^{(N)}_t\big[F^{(N)}(\vec{r}\,)\big] \Big)=
\lim_{N\to\infty}\omega\Big(F^{(N)}(\vec{r}_t{}^*)\,F^{(N)}(\vec{r}_t)\Big)=
\vec{r}_t{}^*\cdot\Big(\Sigma_\beta +\frac{i}{2}\sigma\Big)\cdot\vec{r}_t\ ,
$$
thus recovering (\ref{2.13-1}).

Due to unitality and complete positivity, 
also the maps $\Phi_t$ obey Schwartz-positivity:
\begin{equation}
\Phi_t\big[X^\dag X\big]\,\geq\,\Phi_t\big[X^\dag\big]\,\Phi_t\big[X\big]\ .
\label{2.13-2}
\end{equation}
Using this property and the unitarity of the Weyl operators $W(\vec{r}\,)$, one further finds:
\begin{equation*}
\left|{\rm e}^{f_t(\vec{r})}\right|=\big\|\Phi_t \big[ W(\vec{r}\,)\big]\big\|\leq \|W(\vec{r}\,)\|=1\ ,
\end{equation*}
as can also be directly checked, being $\mathcal{K}_t$ in (\ref{2.13}) a positive definite matrix.

\subsection{Gaussian states and entanglement}

The mesoscopic dissipative dynamics $\Phi_t$ obtained in the previous section is quasi-free 
as it maps Weyl operators into Weyl operators. One can then define a dual map $\widetilde{\Phi}_t$ acting
on any state $\rho$ on the Weyl algebra $\mathcal{W}\left(\mathcal{X},\sigma\right)$, by sending it into
$\rho_t=\widetilde{\Phi}_t[\rho]$, according to the duality relation
\begin{equation}
\tr\Big[\widetilde{\Phi}_t[\rho]\,  W(\vec r\,)\Big]=\tr\Big[\rho\;\Phi_t[W(\vec r\,)]\Big]\ .
\label{2.22}
\end{equation}
As already observed, useful states on $\mathcal{W}\left(\mathcal{X},\sigma\right)$ are Gaussian states
(with zero averages), $\rho_\Sigma$, which are characterized by a Gaussian expectation on Weyl operators:
\begin{equation}
\tr\Big[\rho_\Sigma\, W(\vec{r}\,)\Big]=\e^{-\frac{1}{2}(\vec{r}\cdot\Sigma\cdot \vec{r})}\ ,
\label{2.23}
\end{equation}
with
\begin{equation}
[\Sigma]_{\mu\nu}\equiv\frac{1}{2} \tr \Big[ \rho_\Sigma\, \big\{ F_\mu,\, F_\nu\big\} \Big]\ ,\qquad \mu,\ \nu=1,\ldots,6\ ,
\label{2.24}
\end{equation}
$\{F_\mu\}$ being the bosonic operators introduced in (\ref{1.40}),
corresponding to the large-$N$ fluctuations of the six single-site basic variables chosen in (\ref{1.23}).

These states are completely identified by their covariance matrix $\Sigma$; in particular, as already observed,
positivity of $\rho_\Sigma$ is equivalent to the following
condition \cite{Holevo}:
\begin{equation}
\Sigma+\frac{i}{2}\sigma\geq0\ ,
\label{2.25}
\end{equation}
with $\sigma$ the symplectic matrix in (\ref{1.33}).
One can easily verify that the map $\widetilde{\Phi}_t$ transform Gaussian states into Gaussian states:
\begin{equation}
\nonumber
\tr\Big[\widetilde{\Phi}_t[\rho_\Sigma]\, W(\vec{r}\,)\Big]=
{\rm e}^{f_r(t)}\, \tr\Big[\rho_\Sigma\, W(\vec{r}_t\,)\Big]=
\e^{\left(f_r(t)\,-\,\frac{1}{2}(\vec{r}_t\cdot \Sigma\cdot \vec{r}_t)\right)}=
\tr\Big[\rho_{\Sigma(t)}\, W(\vec{r}\,)\Big]\ ,
\end{equation}
with the time-dependent covariance matrix $\Sigma(t)$ explicitly given by:
\begin{equation}
\Sigma(t)=\Sigma_\beta\,-\,\mathcal{M}_t\,\Sigma_\beta\,\mathcal{M}_t{}^T+
\,\mathcal{M}_t\,\Sigma\,\mathcal{M}_t{}^T\ .
\label{2.26}
\end{equation}

As already remarked, the mesoscopic state $\Omega$, with density matrix $\rho_\Omega$, defined in \hbox{\sl Theorem 1}, is Gaussian with covariance matrix $\Sigma_\beta$; as the microscopic state $\rho^{(N)}$ is invariant under the local dissipative dynamics $\Phi^{(N)}_t$, $\rho_\Omega$
results invariant under the mesoscopic dissipative dynamics $\widetilde{\Phi}_t$, {\it i.e.} $\Sigma(t)=\Sigma_\beta$.

We are now ready to discuss the entanglement properties of our two-chain system at the
mesoscopic level, using the collective variables $(\hat X_1,\hat P_1,\hat X_2,\hat P_2,\hat X_3,\hat P_3)$
introduced in (\ref{1.40}). Actually, since only $\hat X_1$, $\hat P_1$ represent operators pertaining
to the first chain, and $\hat X_2$, $\hat P_2$ only to the second, while $\hat X_3$, $\hat P_3$ are mixed ones belonging to both chains, one should focus on the first two couples. This means that given a mesoscopic 
Gaussian state $\rho_\Sigma$ for the system,
one should trace out the third degrees of freedom, thus obtaining a reduced state $\hat{\rho}_\Sigma$,
involving only the first two modes, but still in Gaussian form. One can easily check that
the corresponding, reduced, two-mode covariance matrix
$\widehat{\Sigma}$ can be simply obtained from the general one $\Sigma$ by deleting from it rows and columns
involving the third degree of freedom. The resulting covariance $\widehat{\Sigma}$ 
is a $4\times 4$ matrix, that can be organized in $2\times 2$ blocks:
\begin{equation}
\widehat{\Sigma}=
\left(
\begin{array}{c|c}
{\Sigma}_1  & {\Sigma}_c \\ \hline
{\Sigma_c{}^\dagger} & {\Sigma}_2
\end{array}
\right)\ .
\label{2.27}
\end{equation}
The entanglement content of any two-mode Gaussian state can be easily studied; indeed, it turns out
that the operation of partial transposition is an exhaustive entanglement witness~\cite{Simon},
offering in addition a way to quantify quantum correlations. It can be conveniently formulated
in terms of the previous decomposition of the covariance matrix \cite{Souza,Isar}. By defining the four quantities:
\begin{equation}
I_1=\det(\Sigma_1)\ ,\qquad I_2=\det(\Sigma_2)\qquad I_3=\det(\Sigma_c)\ ,\qquad
I_4=\tr\Big(\Sigma_1\sigma_3\Sigma_c\sigma_3\Sigma_2\sigma_3\Sigma_c^{\dagger}\sigma_3\Big)\ ,
\label{2.28}
\end{equation}
with $\sigma_3$ the third Pauli matrix, the necessary and sufficient condition for a state to be separable is:
\begin{equation}
\mathcal{S}\equiv I_1I_2+\bigg(\frac{1}{4}-|I_3|\bigg)^2-I_4-\frac{(I_1+I_2)}{4}\ge0\ .
\label{2.29}
\end{equation}
Further, the amount of entanglement in two-mode Gaussian states can be measured 
through the so-called logarithmic negativity of the state: 
\begin{equation}
E=\max\left\{0,-\frac{1}{2}\log_2\left(4\, {\cal I}\right)\right\}\ ,\\
\label{2.30}
\end{equation}
where
\begin{equation}
{\cal I}=\frac{I_1+I_2}{2}-I_3-\bigg(\left[\frac{I_1+I_2}{2}-I_3\right]^2-(I_1I_2+I_3^2-I_4)\bigg)^{1/2}\ .
\label{2.31}
\end{equation}
These results will be now used to analyze the dynamical behaviour of the quantum correlations
between the two chains while following the mesoscopic time evolution $\Phi_t$.

\section{Environment induced mesoscopic entanglement}

Using the previous results, we will now show that
the two chains can get entangled at the mesoscopic level
through the dissipative dynamics $\Phi_t$, without any
direct interaction between them; further, we shall
investigate the behaviour of this bath-generated entanglement in the course of time 
and of its dependence on the dissipative coupling $\lambda$ and the temperature of the initial state. 

By {\it mesoscopic entanglement} we mean the existence of mesoscopic states carrying non-local, 
quantum correlations among the collective operators pertaining to different chains. 
More precisely, we shall focus on the operators $\hat X_1$, $\hat P_1$ and 
$\hat X_2$, $\hat P_2$, that, as already observed, 
are collective degrees of freedom attached to the first, second chain, respectively. 
We shall then study the dynamics of two-mode Gaussian states $\rho_{\widehat\Sigma}$ 
obtained by tracing a full three-mode Gaussian state $\rho_\Sigma$ over the variables $\hat X_3$, $\hat P_3$.

Since $\rho_\Omega$ and thus $\rho_{\widehat\Sigma}$ are time invariant, 
in order to have a non-trivial evolution,
as initial state of the system we shall take a deformation of the mesoscopic state
$\rho_\Omega$, obtained by applying to it suitable squeezing operators,
\begin{equation}
\rho_\Omega^{(k)}=S_1(k)\, S_2(k)\, \rho_\Omega\,  S_2(k)^\dagger\, S_1(k)^\dagger\ ,
\label{3.1}
\end{equation}
involving only the first two relevant modes:%
\footnote{For simplicity we take identical squeezing operations for the two modes,
depending on the real squeezing parameter $k$.} 
\begin{equation}
S_i(k)=\e^{ik\big( \hat X_i \hat P_i + \hat P_i \hat X_i \big)}\ , \quad i=1,2\ .
\label{3.2}
\end{equation}
Notice that the modes are not mixed by the squeezing operation, so that the resulting three-mode state
$\rho_\Omega^{(k)}$ is still a separable Gaussian state. After tracing over the third mode,
the corresponding two-mode covariance matrix $\widehat{\Sigma}_\Omega^{(k)}$ takes the form (\ref{2.27}), with
\begin{equation}
\Sigma_1=\Sigma_2=\frac{1+\eta^2}{4\eta}
\begin{pmatrix}
\e^{4k} & 0\\
0 & \e^{-4k}
\end{pmatrix}\ ,
\qquad
\Sigma_c=0\ ,
\label{3.3}
\end{equation}
showing explicitly the absence of correlations between the two modes.
Under the mesoscopic dynamics $\Phi_t$ obtained in the previous Section, 
the initial state $\rho_\Omega^{(k)}$ will be mapped
into the new Gaussian state $\rho_\Omega^{(k)}(t)$, whose covariance matrix will evolve
according to the law (\ref{2.26}). Restricting to the first two modes, one explicitly finds that
the reduced covariance becomes:
\begin{equation}
\widehat{\Sigma}_\Omega^{(k)}(t)=
\begin{pmatrix}
\mathcal{H}(t)&{ 0}\\
{ 0}&\mathcal{H}(t)
\end{pmatrix} \begin{pmatrix}
A(t)& B(t) \\ 
B(t) & A(t)
\end{pmatrix}\begin{pmatrix}
\mathcal{H}(t)&{ 0}\\
{ 0}&\mathcal{H}(t)
\end{pmatrix}^T\ ,
\label{3.4}
\end{equation}
where
\begin{equation}
\mathcal{H}(t)=
\begin{pmatrix}
\cos\left(2\omega\, t\right)&\sin\left(2\omega\, t\right)\\
-\sin\left(2\omega\, t\right)&\cos\left(2\omega\, t\right)
\end{pmatrix}
\label{3.5}
\end{equation}
accounts for the Hamiltonian part of the dynamics, that, as already observed, does not mix
with the dissipative one, whose contribution is instead encoded in:
\begin{eqnarray*}
&&A(t)=\frac{1+\eta^2}{4 \eta}\left\{\frac{\e^{-\frac{4t\eta}{1+\eta}}}{4}\bigg[3+
\cosh\bigg( \frac{4\eta\lambda t}{1+\eta}\bigg)\bigg] 
\begin{pmatrix}
e^{4k}-1 & 0\\
0 & e^{-4k}-1\\
\end{pmatrix}+{\bf 1}_2\right\}\ ,\\
&&B(t)=\frac{1+\eta^2}{4 \eta}\left\{\frac{\e^{-\frac{4t\eta}{1+\eta}}}{4}\bigg[
\cosh\bigg( \frac{4\eta\lambda t}{1+\eta}\bigg)-1\bigg] 
\begin{pmatrix}
e^{4k}-1 & 0\\
0 & e^{-4k}-1\\
\end{pmatrix}\right\}\ .
\end{eqnarray*}
From these results, one can now study the entanglement content of the two-chain state
by analyzing the behaviour of the logarithmic negativity $E$ introduced in (\ref{2.30});
indeed, by defining:
\begin{equation}
\mathcal{E}(t)=-\frac{1}{2}\log_2\big(4\, {\cal I}(t)\big)
\label{3.6}
\end{equation}
with ${\cal I}$ as in (\ref{2.31}), one explicitly finds:
\begin{equation}
\mathcal{E}(t)=\left(\frac{1+\eta ^2}{4 \eta }\right)^2 \e^{-4\big(k+\frac{2 \eta  t}{1+\eta}\big)}
\bigg(\e^{4k}+\e^{\frac{4 \eta  t}{1+\eta}}-1\bigg) 
\left[\e^{4\left(k+\frac{\eta  t}{1+ \eta}\right)}-\left(\e^{4 k}-1\right) \cosh
   ^2\left(\frac{2 \eta\lambda  t}{1+\eta}\right)\right]\ .
\label{3.7}
\end{equation}

As clearly shown by the figures below, reporting the behavior in time of $E$,
the dissipative, mesoscopic dynamics $\Phi_t$ can indeed generate
quantum correlations starting from a completely separable initial state,
provided a nonvanishing squeezing parameter $k$ is chosen. Since the Hamiltonian does not contain coupling terms
and the dynamics it generates completely
decouples giving no contribution to $E$, entanglement between the
two chains is generated at the mesoscopic, collective level by the purely noisy
action of the environment in which the two chains are immersed.

\begin{figure}[h!]
\center\includegraphics[scale=0.45]{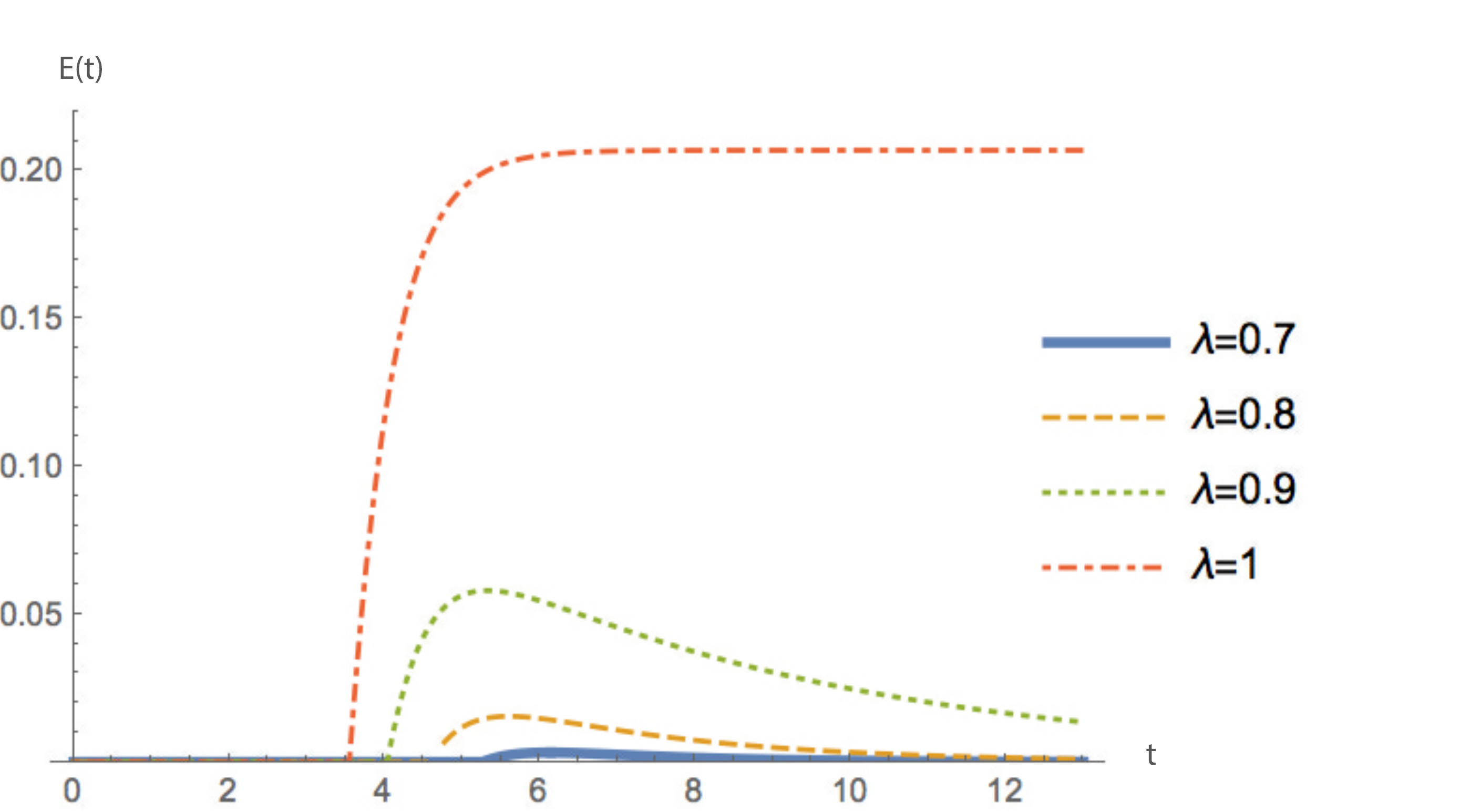}
\caption{\small Behaviour in time of the logarithmic negativity $E$ for different values of the dissipative parameter 
$\lambda$, at fixed squeezing parameter, $k=1$ and temperature, $T=0.1$.}
\label{Fig1}
\end{figure}

\begin{figure}[h!]
\center\includegraphics[scale=0.45]{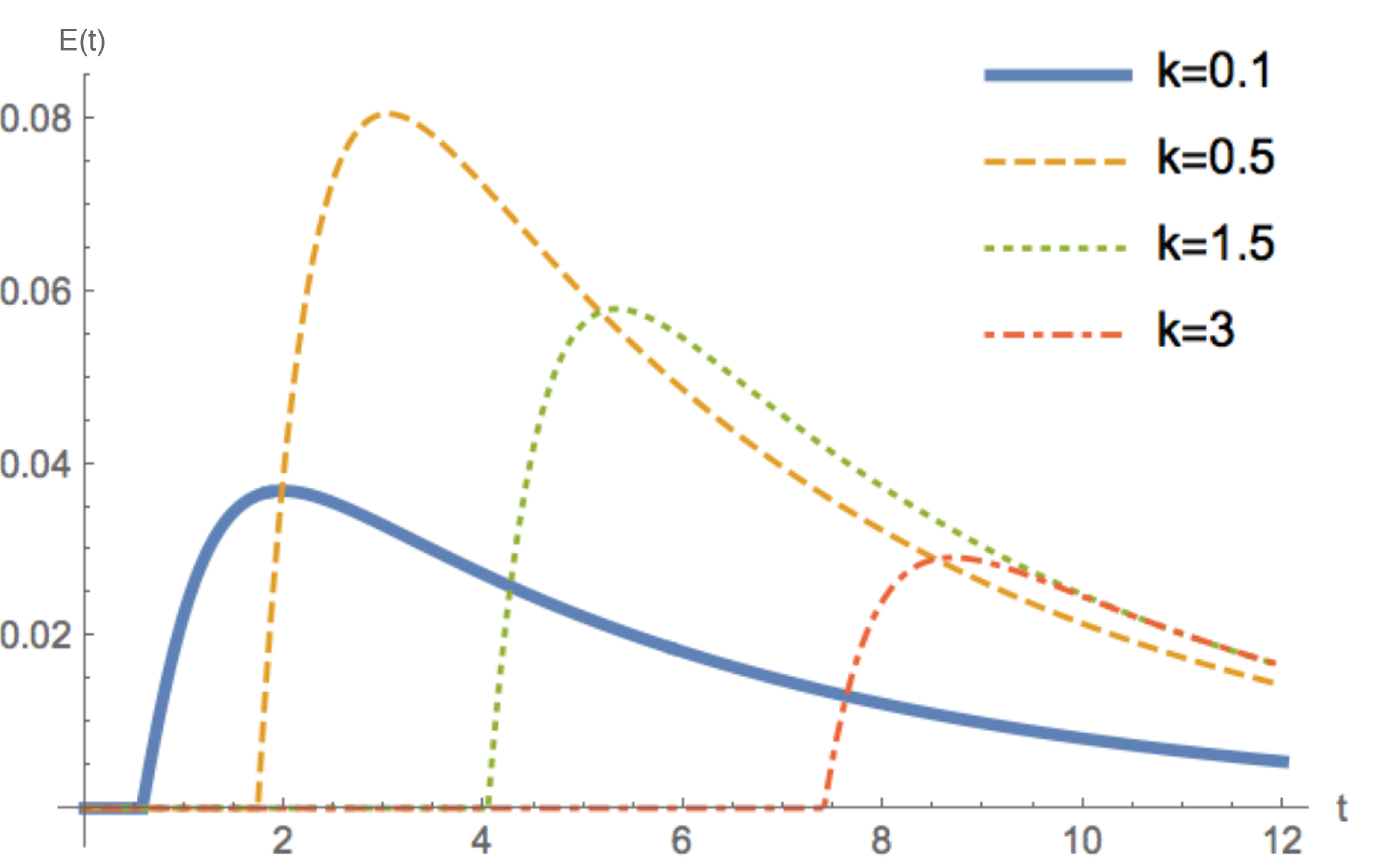}
\caption{\small Behaviour in time of the logarithmic negativity $E$ for different values of the squeezing parameter $k$,
at fixed dissipative parameter, $\lambda=0.9$, and temperature, $T=0.1$.}
\label{Fig2}
\end{figure}

\begin{figure}[h!]
\center\includegraphics[scale=0.45]{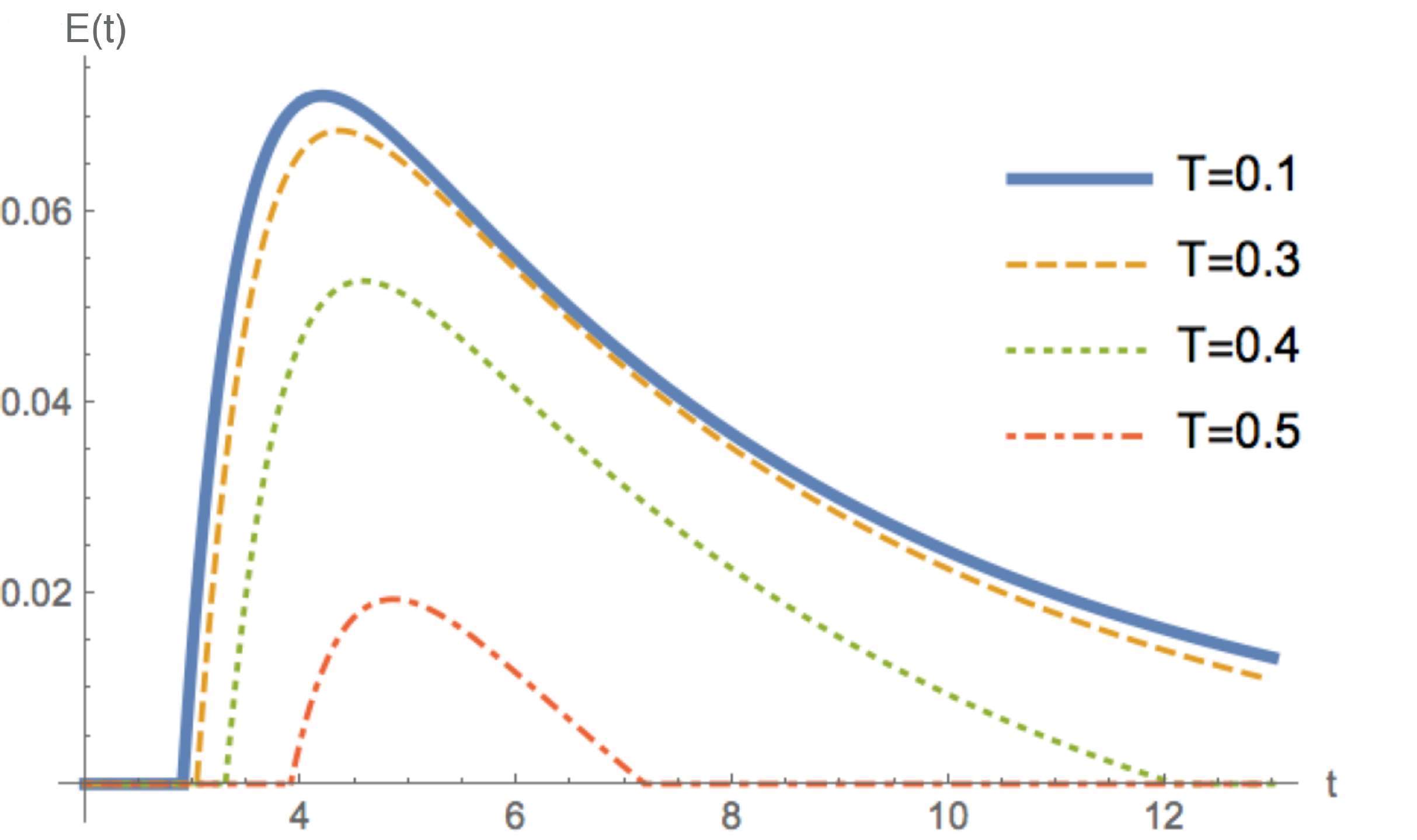}
\caption{\small Behaviour in time of the logarithmic negativity $E$ for different values of the temperature $T$, at fixed dissipative, $\lambda=0.9$, and squeezing parameter, $k=1$.}
\label{Fig3}
\end{figure}

The behaviour in time of the created entanglement depends on the parameter
$\lambda$, measuring the coupling of the system with the environment,
the initial squeezing parameter $k$ and the the bath temperature $T=1/\beta$,
(through the parameter $\eta$).
One sees that the generated entanglement increases 
as the dissipative coupling $\lambda$ gets larger ({\it cf.} Figure 1),
while a non-zero entanglement appears earlier in time.

Also the amount of squeezing plays an essential role; 
while a non-vanishing squeezing appears necessary to create quantum correlations, 
too much squeezing decreases the maximum value of $E$ ({\it cf.} Figure 2). Squeezing also influences the time at which it is
first generated. Further, for fixed $T$ and $\lambda$, 
there is a value of the squeezing parameter $k$ allowing for a maximal value of $E$.

Finally, the effect of the temperature is displayed in Figure 3, for fixed dissipative and squeezing
parameters. One sees that
increasing the temperature, the maximum of the logarithmic negativity $E$ decreases, 
indicating that there exists a value of the temperature above which no entanglement is possible
(see also Figure 6(a) below).

\begin{figure}[h!]
\center\includegraphics[scale=0.45]{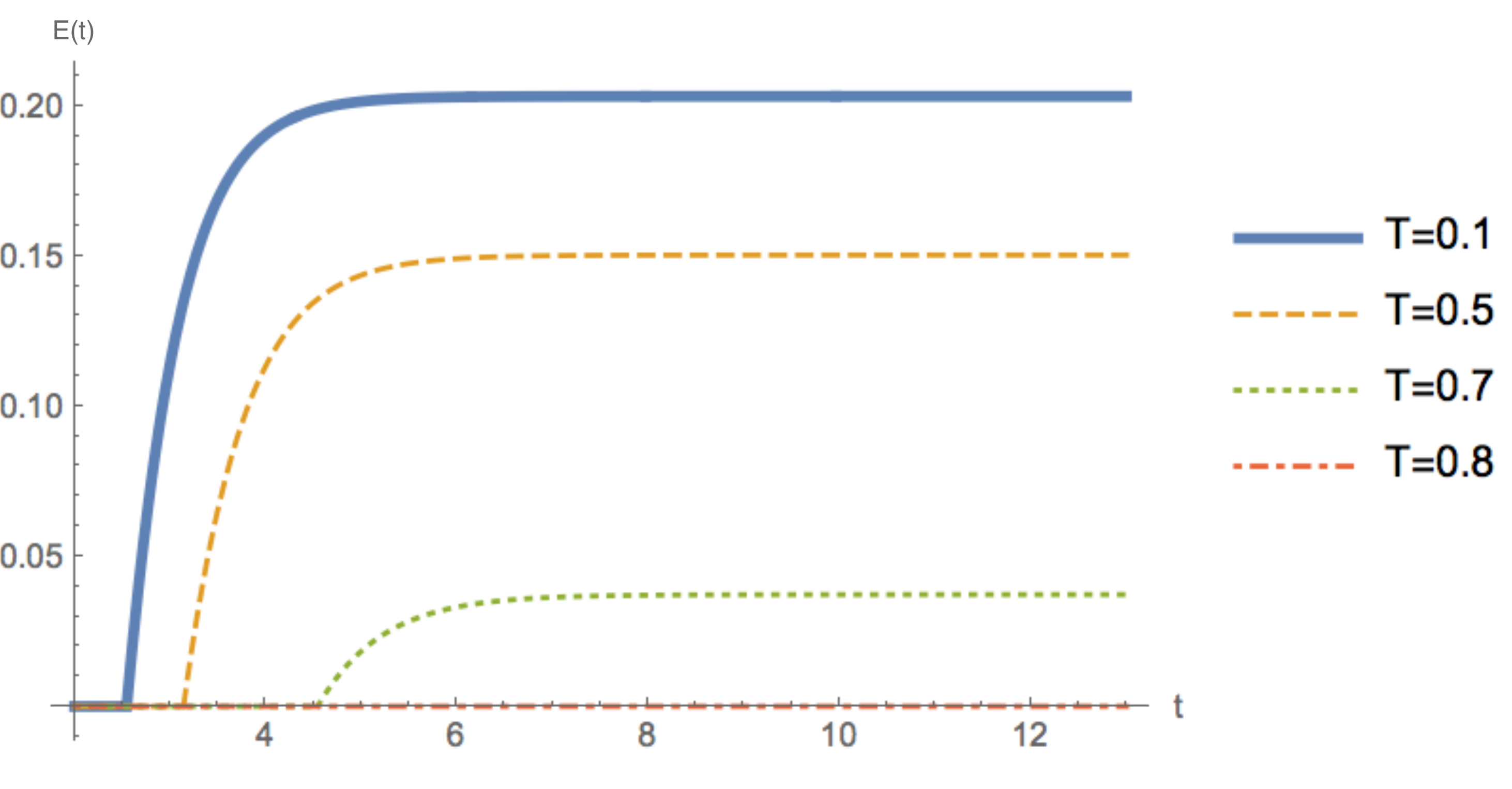}
\caption{\small Behaviour in time of the logarithmic negativity $E$ for different values of the temperature $T$,
at fixed dissipative, $\lambda=1$, and squeezing parameter, $k=1$. Entanglement rapidly reaches an asymptotic 
nonvanishing value even for nonvanishing temperature.}
\label{Fig4}
\end{figure}

\begin{figure}[h!]
\center\includegraphics[scale=0.45]{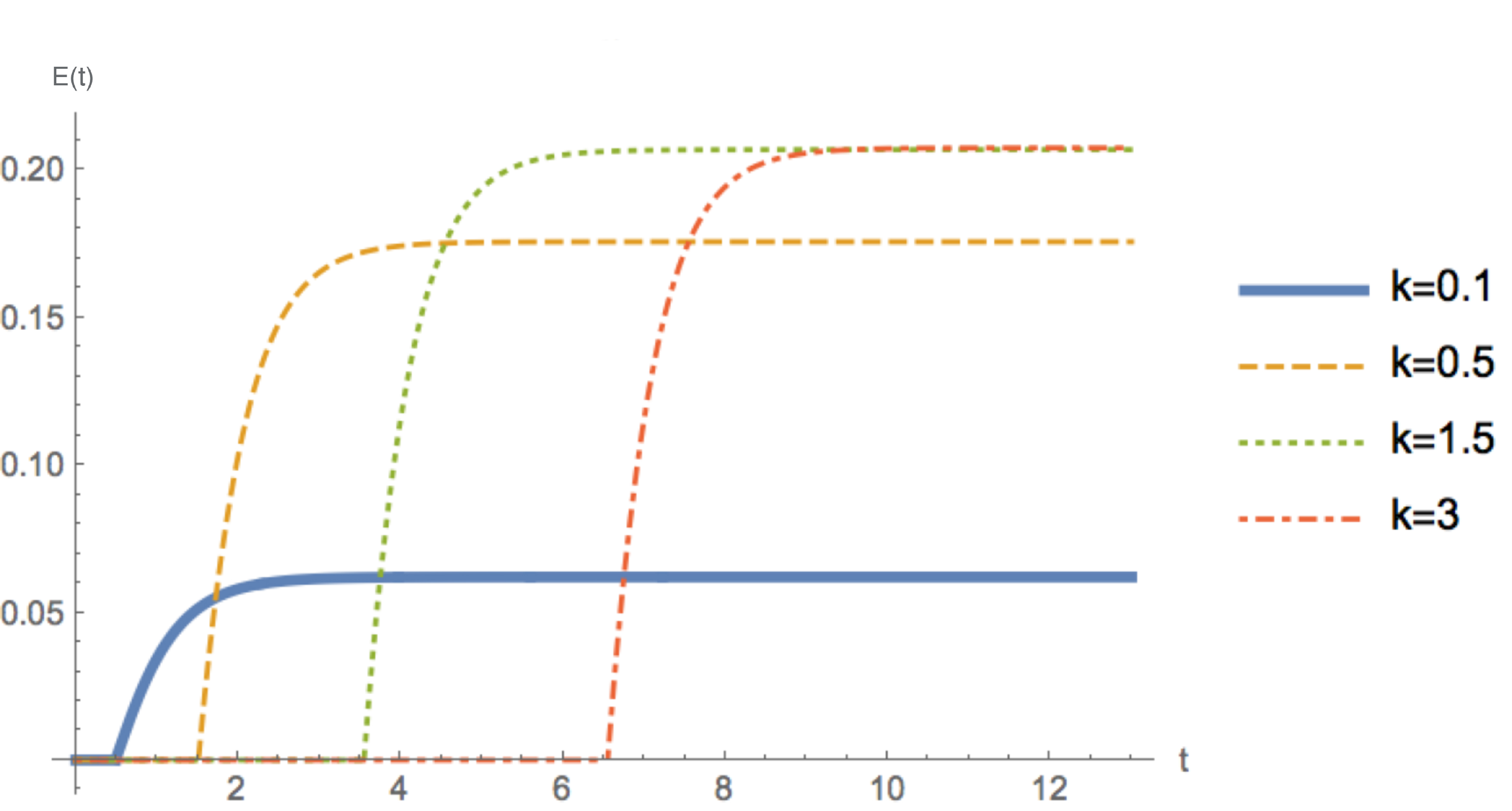}
\caption{\small Behaviour in time of the logarithmic negativity $E$ for different values of the squeezing parameter $k$,
at fixed dissipative parameter, $\lambda=1$, and temperature, $T=0.1$. Notice that, although a nonvanishing 
squeezing is needed for generating entanglement, high values of $k$ do not in general correspond to
a larger asymptotic entanglement.}
\label{Fig4}
\end{figure}

In addition, the time behaviour of the logarithmic negativity $E$ 
shows two further interesting phenomena, the
so-called ``sudden birth'' and ``sudden death'' of entanglement \cite{Eberly}, {\it i.e.} the sudden generation
of entanglement only after a finite time since the starting of the dynamics, 
and the abrupt vanishing of it at a later, finite time. These two effects can be analyzed in detail 
by looking at the explicit expression of ${\cal E}$ in (\ref{3.7}).

In order to study the phenomenon of sudden birth of entanglement, one has to analyze the behaviour
of the logarithmic negativity $E$ in a right neighborhood of $t=0$.
From (\ref{3.7}), one sees that for small times
$\mathcal{E}(t)$ is always negative, for all values of the temperature and squeezing parameter. 
Being $E=\,0$ at $t=\,0$, this implies that it remains so also for small times.
In other terms, a finite time delay is necessary
before quantum correlations can start to be generated by the dissipative dynamics.

To analyze the phenomenon of sudden death, one should instead look at the behavior of $\mathcal{E}(t)$
for large times:
\begin{equation}
\lim_{t \to \infty}\mathcal{E}(t) = \begin{cases}
 -\frac{1}{2}\log\left(\frac{\left(3+e^{-4k}\right)\left(1+\eta^2\right)^2}{16\eta^2}\right) , & \mbox{if } \lambda=1\ , \\
  -\frac{1}{2}\log\left(\frac{\left(1+\eta^2\right)^2}{4\,\eta^2}\right), & \mbox{if } \lambda<1\phantom{\Bigg|}\ .
  \end{cases}
\label{3.8}
\end{equation}
Let us examine first the case $\lambda<1$. In this situation, $\mathcal{E}$ has an asymptotic value
depending only on the temperature $T$ through the parameter $\eta$. For $T>0$, {\it i.e.} $\eta<1$,
this value is negative;
therefore, since entanglement has been created ($k>0$), there must exist a finite time $t=t_0$ at which quantum
correlations vanish, thus giving rise to the phenomenon of sudden death of entanglement.
Further, this time $t_0$ becomes larger and larger as the value of $\lambda$ increases 
(see again Figure~1).
 
Actually, when $\lambda$ reaches its maximum, $\lambda=1$, a non-vanishing asymptotic entanglement
is possible, provided the bath temperature is not too high. This behaviour is clearly shown
in Figure~4 and Figure~5.
This is an important result; it gives the possibility of preparing 
a bipartite many-body quantum systems in a mesoscopic entangled state
by means of the dissipative action of an engineered environment: through a purely mixing mechanism
such a bath produces and protects 
quantum correlations for times becoming the longer, the closer the paramater $\lambda$
is to the value one. This may be important in actual experimental applications,
where achieving exactly $\lambda=1$ might be difficult in practice.

This result is further illustrated by Figure 6(b), where the points in the $(k,T)$ plane 
with non-vanishing large-time mesoscopic entanglement are highlighted. This figure shows two regions,  
a darker one associated with a non-vanishing asymptotic value of $E$ and a brighter one with
vanishing asymptotic value of $E$ and therefore no entanglement. 
The line separating the two regions determines the ``critical temperature'' $T_c$, 
above which entanglement among the two chains is not possible, 
as a function of the squeezing parameter; it is defined implicitly by the condition 
$\lim_{t\to\infty}E(k,T)=0$.

\begin{figure}[h]
 \centering
 \subfigure[]
   {\includegraphics[scale=0.33]{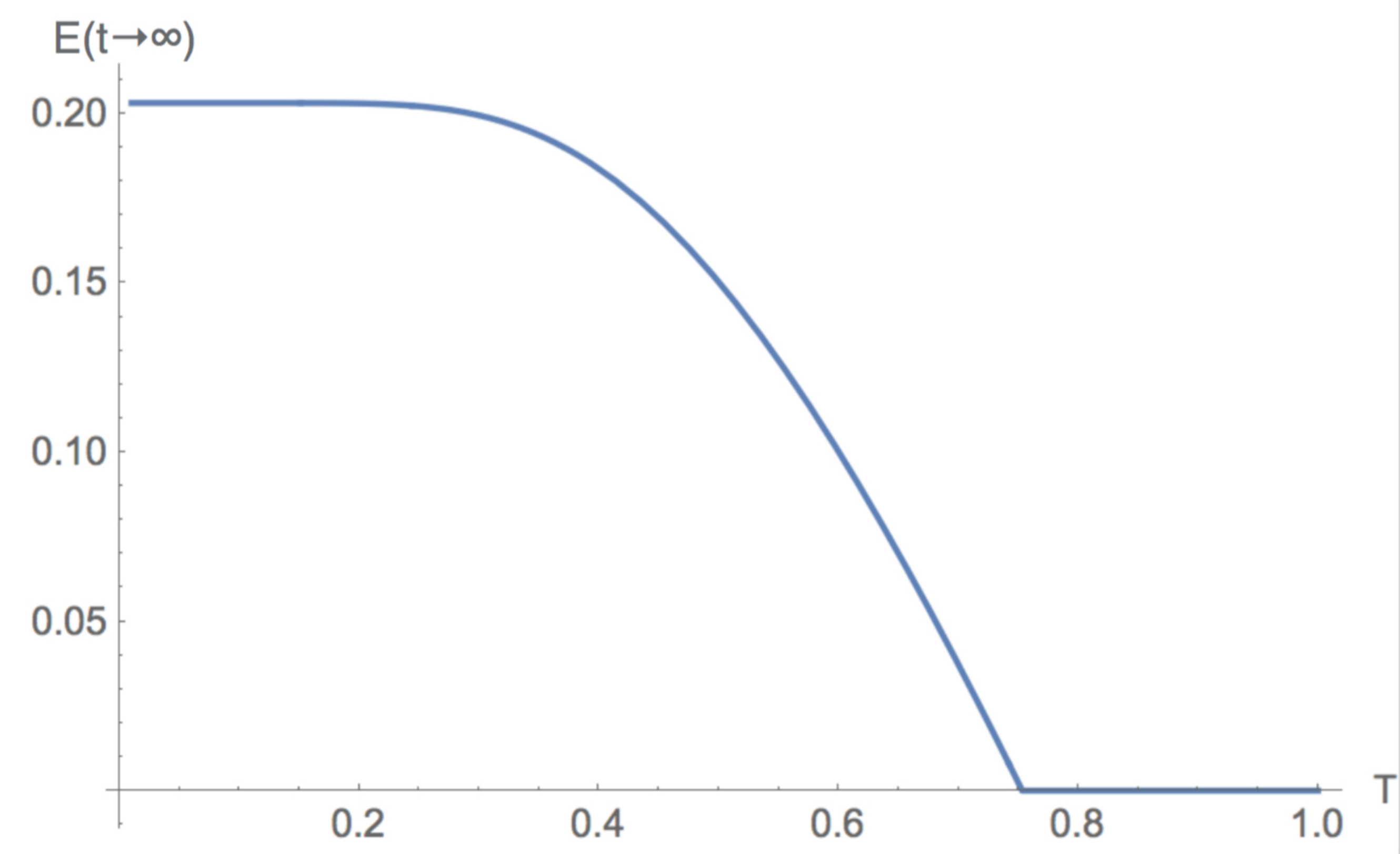}}
\hskip 1 cm 
\subfigure[]
   {\includegraphics[scale=0.45]{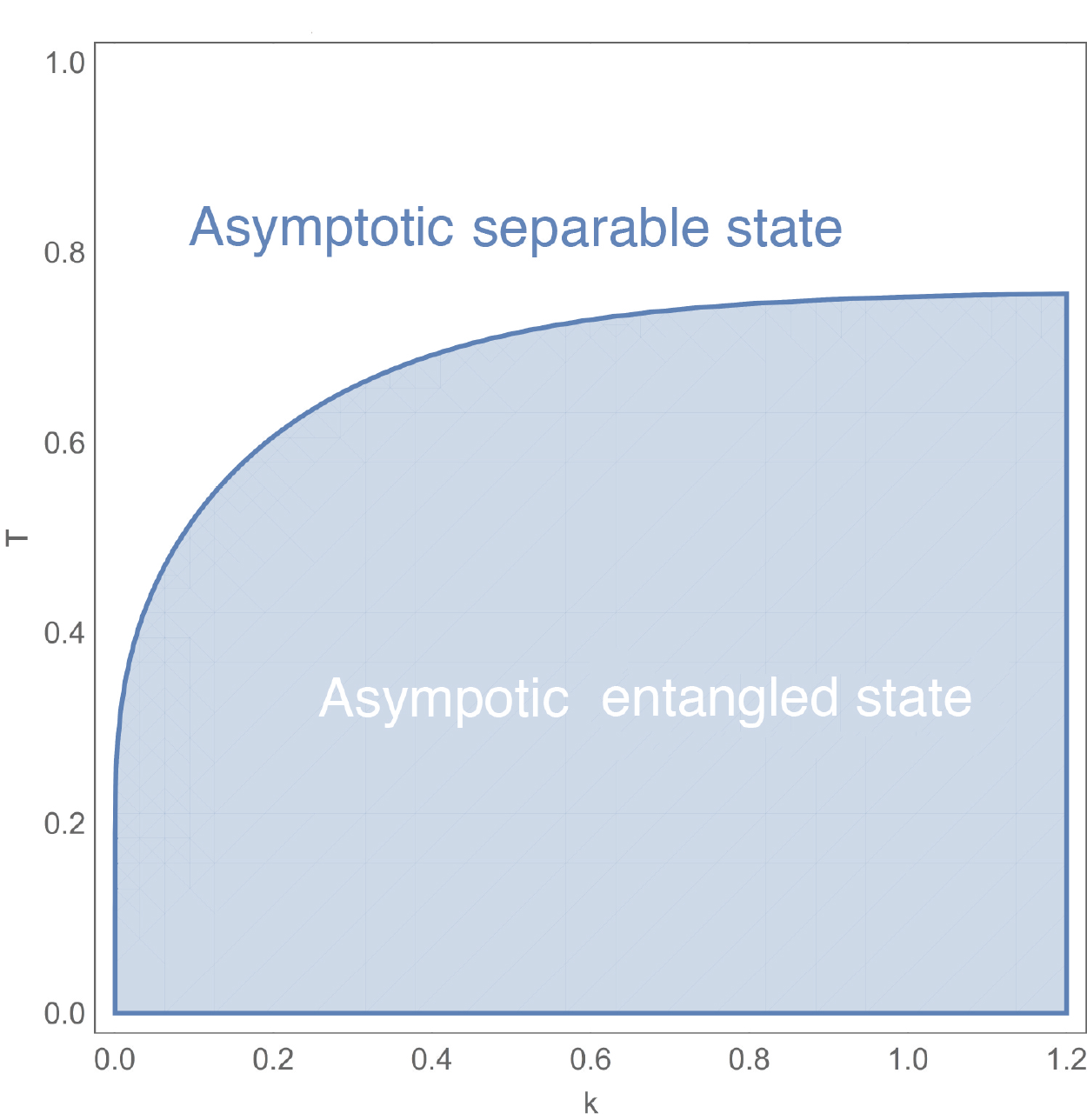}}
 \caption{\small On the left: asymptotic value of the logarithmic negativity $E$ as a function of the temperature $T$
for fixed values of the squeezing $k=1$ and dissipative $\lambda=1$ parameters. 
There clearly exists a critical temperature ($T_c \simeq 0.75$) below which one has a non vanishing asymptotic entanglement. On the right: $(k,T)$-parameter space; the line separates the regions 
in which entanglement is non vanishing and zero, respectively.}
\label{Fig5}
 \end{figure}

\section{Outlook}

The description of many-body systems, {\it i.e.} of systems made of a large number $N$ of elementary constituents, 
generally involves the analysis of collective observables, accounting for all their degrees of freedom.
Mean-field operators are typical examples of such observables: they are algebraic means of single particle
observables, as the case of mean magnetization in spin systems. These quantities scale as $1/N$ as the
number of constituents increase, thus behaving as ``classical'' observables in the thermodynamic,
large-$N$ limit.

On the contrary, fluctuation operators, defined in analogy with classical stochastic theory as deviations from the mean,
retain a quantum character even in the large-$N$ limit: they are a different class of collective observables
scaling as $1/\sqrt{N}$. The algebra they form turns out to be in general non-commutative and always of bosonic type,
allowing probing the quantum character of the many-body system at the mesoscopic level, in between
the microscopic single-particle world and the classical macroscopic regime.

Within this general framework, we have discussed the quantum dynamics of fluctuations 
of a system composed by two independent chains
of free oscillators, both immersed in a weakly-coupled external bath. The total system is therefore open,
so that noise and dissipation ought to occur. Nevertheless, despite the decohering and noisy effects
induced by the bath, the two chains can get entangled at the mesoscopic scale through 
a purely mixing-enhancing mechanism, thanks to the properties of the emergent open dynamics of fluctuations.

We have analyzed in detail the behaviour of such environment generated, collective entanglement
and its dependence on the initial system temperature and the strength of the coupling between system and bath.
Despite its inevitable dissipative action, the environment 
can nevertheless sustain non vanishing collective quantum correlations
among the two chains for asymptotically large times, even at nonvanishing temperatures. 
This is a relevant
result, since so far a nonvanishing asymptotic collective entanglement for many-body fluctuation observables
has been obtained only at zero temperature. 

The existence of an asymptotic mesoscopic state that, differently from
the separable stationary thermal state, is entangled, reveals that a
rich convex set of asymptotic states is enforced by the structure of the
Kossakowski-Lindblad generator \cite{Frigerio}, and specific protocols have been proposed
to prepare predefined entangled states via the action of suitably
engineered environments \cite{Kraus}-\cite{Muschik2}.
Clearly, the structure of the generator
depends on the choice of microscopic observables whose mesoscopic
fluctutations have been proved to become entangled; therefore, the
practical availability of this quantum resource in general depends on
the actual experimental accessibility of many-body observables scaling
with the inverse square root of the number of particles. This problem,
relevant for specific applications, goes beyond the main scope of this paper which
aims at showing how quantum correlations may occur in many-body contexts
despite the high number of constituents generically expected only to lead 
to a classical behaviour.

Indeed, we expect our results to be of interest in experiments involving
spin-like and optomechanical systems, or ultra-cold gases trapped in optical lattices and, more in general,
in all instances where a coherent quantum behaviour is expected to emerge at the mesoscopic level; in particular, 
the possibility of entangling many-body systems through a purely mixing mechanism at nonzero temperature
will surely reinforce their use in quantum information and quantum communication.

\vskip 2cm

\section{Appendix}

We collect in this Appendix the proofs of the results presented in the main text, which, for their
technical character, would hamper the presentation.

\subsection{Algebra of mean-field operators}

In order to prove that the algebra of mean-field operators $X^{(N)}$, as defined in (\ref{1.16}),
is commutative, it is convenient to work with exponentials of the form $\e^{i X^{(N)}}$.
Then the following result holds:

\begin{proposition}
Given a set of single-site observables, $X_1$, $X_2$,\ldots, $X_n$ and the state $\rho^{(N)}$ in (\ref{1.3}),  
their corresponding mean-field averages $X^{(N)}_j$, defined as in (\ref{1.16}), are such that 
\begin{equation}
\lim_{N\to\infty}\bigg\langle\prod_{j=1}^n  \e^{i X^{(N)}_j}\bigg\rangle_N=
\exp\left(i\sum_{j=1}^n \expec{X_j}\right)=
\lim_{N\to\infty}\bigg\langle \e^{i\sum_{j=1}^n X^{(N)}_j}\bigg\rangle_N\ .
\label{A.1}
\end{equation}
\end{proposition}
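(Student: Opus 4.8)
The plan is to exploit two structural facts established earlier in the excerpt: the state $\rho^{(N)}$ factorizes as a product $\prod_{k=1}^N \rho^{[k]}$ over sites, and single-site operators attached to distinct sites commute. Together these reduce both expectation values to the $N$-th power of a single-site quantity, after which a first-order expansion in $1/N$ delivers the claimed limit. Throughout I write $Z \equiv \sum_{j=1}^n X_j$, a single-site operator.

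For the right-hand side I would first note that $\sum_{j=1}^n X_j^{(N)}$ is itself the mean-field average of $Z$, namely $\sum_{j=1}^n X_j^{(N)} = \frac{1}{N}\sum_{k=1}^N Z^{[k]}$. Since the $Z^{[k]}$ commute across sites, the exponential factorizes as $\e^{i\sum_j X_j^{(N)}} = \prod_{k=1}^N \e^{\frac{i}{N}Z^{[k]}}$, and the product form of $\rho^{(N)}$ together with translation invariance (\ref{1.7}) gives $\big\langle \e^{i\sum_j X_j^{(N)}}\big\rangle_N = \big(\expec{\e^{\frac{i}{N}Z}}\big)^N$. Expanding the single-site exponential to first order, $\expec{\e^{\frac{i}{N}Z}} = 1 + \frac{i}{N}\expec{Z} + O(1/N^2)$, and taking the $N$-th power yields $\exp\big(i\expec{Z}\big) = \exp\big(i\sum_j \expec{X_j}\big)$ in the limit.

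For the left-hand side I would use the same two facts, but now regroup the product by site. Writing each factor as $\e^{iX_j^{(N)}} = \prod_{k=1}^N \e^{\frac{i}{N}X_j^{[k]}}$ and using that factors at distinct sites commute, the full product reorganizes into $\prod_{k=1}^N U^{[k]}$ with the single-site operator $U = \prod_{j=1}^n \e^{\frac{i}{N}X_j}$, the ordering in $j$ being inherited from the original product. The product state then gives $\big(\expec{U}\big)^N$, and expanding $U = 1 + \frac{i}{N}\sum_j X_j + O(1/N^2)$, where every cross term $X_jX_l$ is quadratic in $1/N$, produces the same first-order behaviour $\expec{U} = 1 + \frac{i}{N}\expec{Z} + O(1/N^2)$. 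Crucially, although $\expec{U}$ and $\expec{\e^{\frac{i}{N}Z}}$ differ at order $1/N^2$ by terms proportional to the single-site commutators $\expec{[X_j,X_l]}$, this discrepancy is invisible after the $N$-th power, since $(1 + a/N + O(1/N^2))^N \to \e^a$ independently of the $1/N^2$ coefficient. This is exactly the mean-field statement, already reflected in (\ref{1.18}), that the $X_j^{(N)}$ commute in the limit, so operator ordering is irrelevant.

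The main obstacle is rigorously controlling the remainder in the first-order expansion. Because the $X_j$ are generally unbounded polynomials in the canonical variables, one cannot naively truncate the exponential series; one must show that the tail contributes at order $O(1/N^2)$ uniformly, and that the $N$-th power of a factor $1 + a/N + R_N$ with $|R_N| = O(1/N^2)$ converges to $\e^a$. I would handle this exactly as in the proof of Lemma 1, invoking the moment bounds of \emph{Lemma 3} (Section 6.2) to estimate $\expec{Z^m}$ and thereby bound the remainder, guaranteeing its subdominance. Once this estimate is in place, both sequences converge to the common value $\exp\big(i\sum_j \expec{X_j}\big)$, establishing the chain of equalities (\ref{A.1}).
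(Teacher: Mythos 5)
Your argument is correct, but it follows a genuinely different route from the paper's. You exploit the product structure of $\rho^{(N)}$ and the commutativity of distinct sites to collapse both expectations into the $N$-th power of a single-site quantity, and then control a first-order Taylor expansion of the (unbounded) exponent via the Gaussian moment bounds of {\sl Lemma 3} --- essentially replaying the proof of {\sl Lemma 1} with the mean-field scaling $1/N$ in place of $1/\sqrt{N}$ (your remainder is then $O(N^{-2})$ rather than $O(N^{-3/2})$, and for the product of $n$ exponentials the iterative peeling is closer in spirit to the proof of {\sl Lemma 2}). The paper instead proves the stronger intermediate statement (\ref{A.2}), namely $\lim_{N\to\infty}\langle A^{(N)}\,\e^{iX^{(N)}}\rangle_N=\e^{i\expec{X}}\lim_{N\to\infty}\langle A^{(N)}\rangle_N$ for an arbitrary bounded $A^{(N)}$, by writing $\e^{iX^{(N)}}-\e^{i\expec{X}}$ as a Duhamel-type integral and bounding it via Cauchy--Schwarz by the square root of the variance of $X^{(N)}$, which equals $\big(\expec{X^2}-\expec{X}^2\big)/N$; the exponentials are then stripped off one at a time, each remaining factor being unitary and hence bounded. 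What the paper's route buys is twofold: it needs only finite second moments of the single-site observables (no Taylor expansion of an unbounded exponent, hence no appeal to all Gaussian moments and Wick's theorem), and it simultaneously delivers the weak-operator-topology statement (\ref{1.19})--(\ref{1.20}) quoted in the main text, which your computation of the two specific expectation values does not immediately give. What your route buys is a more transparent, law-of-large-numbers-style calculation and an explicit explanation --- via the observation that the $O(1/N^2)$ ordering discrepancy between $\prod_j\e^{iX_j^{(N)}}$ and $\e^{i\sum_jX_j^{(N)}}$ is washed out by the $N$-th power --- of why the two sides of (\ref{A.1}) coincide.
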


\begin{proof}
We first prove that, given a single-site observable $X$, one has 
\begin{equation}
\lim_{N\to\infty}\big\langle A^{(N)}\, \e^{i X^{(N)}}\big\rangle_N = 
\e^{i\expec{X}}\lim_{N\to\infty}\expec{A^{(N)}}_N\ ,
\label{A.2}
\end{equation}
for any bounded operator $A^{(N)}$ in the oscillator algebra ${\cal A}^{(N)}$, $\|A^{(N)}\|<\infty$, $\forall\, N$. In order to show this, 
let us consider the difference: 
$$
I^{(N)}=\e^{i X^{(N)}}-\e^{i\expec{X}}\ ,
$$ 
that can be conveniently rewritten as:
$$
I^{(N)}=\int_0^1{\rm d}s\, \frac{{\rm d}}{{\rm d}s} \e^{is X^{(N)}}\,e^{i(1-s)\expec{X}}\, ;
$$
evaluating the derivative one gets:
$$
I^{(N)}=i\int_0^1{\rm d}s\, \e^{is X^{(N)}}\,e^{i(1-s)\expec{X}}\,\left(X^{(N)}-\expec{X}\right)\, .
$$
Using the Cauchy-Schwarz inequality, one can then write:
$$
\Big|\expec{A^{(N)}\, I^{(N)}}_N\Big|\le \|A^{(N)}\|\ \sqrt{\expec{\big(X^{(N)}-\expec{X}\big)^2}_N}\ .
$$
Since $A^{(N)}$ is bounded, one needs to analyze the behaviour of the expectation under the square root;
by expanding the square, one obtains:
$$
\expec{\big(X^{(N)}-\expec{X}\big)^2}_N=\frac{1}{N^2}\sum_{j,k=1}^N\Big(\expec{X^{[j]} X^{[k]}}-\expec{X}^2\Big)\ ,
$$
and further using (\ref{1.8}) and (\ref{1.9}),
$$
\expec{\big(X^{(N)}-\expec{X}\big)^2}_N=\frac{1}{N}\left(\expec{X^2}-\expec{X}^2\right)\ .
$$
In the large-$N$ limit this expectation is thus vanishing, thus proving (\ref{A.2}) above.

This result can now be used to prove the first equality in (\ref{A.1}). Indeed, one can write:
$$
\bigg\langle\prod_{j=1}^n  \e^{i X^{(N)}_j}\bigg\rangle_N =
\bigg\langle\prod_{j=1}^{n-1}  \e^{i X^{(N)}_j}\ \e^{i X^{(N)}_n}\bigg\rangle_N\ ,
$$
and recalling that unitary operators are bounded, (\ref{A.2}) implies:
$$
\lim_{N\to\infty}\bigg\langle\prod_{j=1}^n  \e^{i X^{(N)}_j}\bigg\rangle_N =
\e^{i \expec{X_n}}\ \lim_{N\to\infty}\bigg\langle \prod_{j=1}^{n-1}  \e^{i X^{(N)}_j}\bigg\rangle_N\ .
$$
Repeating this procedure recursively for all exponential factors, one immediately obtains the first equality in (\ref{A.1}).
In addition, using the result (\ref{A.2}) with $A^{(N)}= {\bf 1}$ and $X=\sum_{j=1}^n X_j$, due to the linearity
of the averages, one readily obtains also the second equality in (\ref{A.1}) and thus the proof of the
entire {\sl Proposition}.
\end{proof}
This result shows that the large $N$ limit of mean-field operators $X^{(N)}$ behaves as a multiple of the identity; 
this convergence has to be understood as a converge in distribution, 
similar to the one in the law of large numbers \cite{Feller}; indeed, the expectations of the exponentials involved 
in the {\sl Proposition} are nothing but the characteristic functions of the operators $X^{(N)}$. The result 
$\lim_{N\to\infty}\langle\e^{iX^{(N)}}\rangle_N=\e^{i\expec{X}}$
shows that in the large-$N$ limit $X^{(N)}$ is no longer a quantum random variable, 
rather a deterministic variable, equal to its expectation.

\subsection{Large-$N$ behaviour of ${\cal R}_r^{(N)}$}

In this Section we shall give an estimate for the large-$N$ behaviour of the rest ${\cal R}_r^{(N)}$
appearing in the proof of {\sl Lemma 1} and {\sl Lemma 2}; it can be deduced from a general
result, as expressed by the following {\sl Lemma}.%
\footnote{Using different techniques, the general case is treated in \cite{Goderis1}-\cite{Goderis3}.}

\begin{lemma}
Given a zero-average Gaussian state $\rho$ and an homogeneous polynomial $X$ of degree two in the canonical variables
$\{x_1,p_1,x_2,p_2\}$, the sum
\begin{equation}
{\cal R}_\ell^\delta=\sum_{k=\ell}^\infty\frac{i^k}{k!}\left(\frac{1}{N^{\delta}}\right)^k\big(X-\expec{X}\big)^k\ ,
\qquad \delta>0\ ,\ \ell\in\mathbb{N}\ ,
\label{A.3.1}
\end{equation}
behaves such that 
\begin{equation}
\left|\expec{A^{(N)}\ {\cal R}_\ell^\delta\ B}\right|=O\Big( N^{-\ell\,\delta}\Big)\, ,
\label{A.3.2}
\end{equation}
for $N$ large enough, where $A^{(N)}$ is any bounded operator in the oscillator algebra, 
\hbox{$\|A^{(N)}\|<\infty$}, $\forall\, N$, and  $B$ a monomial of degree $n$ in $\{x_1,p_1,x_2,p_2\}$.
\end{lemma}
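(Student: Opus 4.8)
The plan is to estimate the series defining ${\cal R}_\ell^\delta$ termwise and reduce everything to a factorial bound on the Gaussian moments $\langle B^\dagger Y^{2k} B\rangle$, where $Y\equiv X-\langle X\rangle$ is the centered (and Hermitian) quadratic. First I would use the triangle inequality to write
$$
\left|\langle A^{(N)}\, {\cal R}_\ell^\delta\, B\rangle\right|\le \sum_{k=\ell}^\infty \frac{1}{k!\,N^{k\delta}}\,\left|\langle A^{(N)}\, Y^k\, B\rangle\right|\ ,
$$
and bound each term by the Cauchy--Schwarz inequality for the GNS inner product $\langle P,Q\rangle=\tr[\rho\,P^\dagger Q]$ attached to the Gaussian state, exactly as was done for the mean-field \emph{Proposition} in Section 6.1. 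Reading $\langle A^{(N)} Y^k B\rangle$ as the inner product of $(A^{(N)})^\dagger$ with $Y^k B$ gives
$$
\left|\langle A^{(N)}\, Y^k\, B\rangle\right|\le \sqrt{\tr[\rho\,A^{(N)}(A^{(N)})^\dagger]}\,\sqrt{\tr[\rho\,B^\dagger Y^{2k} B]}\le \|A^{(N)}\|\,\sqrt{\langle B^\dagger Y^{2k} B\rangle}\ ,
$$
which is precisely where the boundedness of $A^{(N)}$ enters, and where $Y^\dagger=Y$ lets me merge the two powers $Y^k$ into $Y^{2k}$ sandwiched between $B^\dagger$ and $B$.

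The crux is then a factorial estimate for the Gaussian moment $\langle B^\dagger Y^{2k} B\rangle$. Since $\rho$ is Gaussian with zero mean, I would compute this by Wick's theorem: expanding the fixed quadratic as a finite sum $Y=\sum c_{ab}\,R_a R_b$ of degree-two monomials in the four canonical operators, the product $B^\dagger Y^{2k} B$ becomes a combination of at most $(16\,|c|_{\max})^{2k}$ monomials, each of degree $2n+4k$ in the $R$'s. Each such monomial is evaluated as a sum over pairings of its factors, of which there are $(2n+4k-1)!!$, with every two-point contraction bounded by a constant $\Sigma_{\max}$ read off from the covariance of $\rho$. Collecting these contributions yields a bound of the form $\langle B^\dagger Y^{2k} B\rangle\le C_1\,C_2^{\,k}\,(2n+4k-1)!!$, and Stirling's formula gives $(2n+4k-1)!!=\mathrm{const}^{\,k}(k!)^2$ up to polynomial factors in $k$. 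Taking the square root, the factorial is exactly compensated by the $1/k!$ prefactor, leaving only geometric growth, $\tfrac{1}{k!}\sqrt{\langle B^\dagger Y^{2k} B\rangle}\le C_3\,C_4^{\,k}$.

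Substituting this into the termwise bound reduces the problem to the geometric series $\|A^{(N)}\|\,C_3\sum_{k\ge\ell}(C_4/N^\delta)^k$, which for $N$ large enough that $N^\delta>C_4$ converges and equals $O\big((C_4/N^\delta)^\ell\big)=O(N^{-\ell\delta})$, the asserted estimate. The main obstacle is the moment bound of the second step: one must verify that the high moments of the quadratic $Y$ grow \emph{precisely} like $(k!)^2$ and no faster, so that the $1/k!$ of the exponential series tames them into a summable geometric tail rather than a divergent one. A secondary, more routine point is to justify the termwise summation of expectations; this follows \emph{a posteriori} from the absolute convergence guaranteed by the same moment estimate, once one recalls that $\e^{iY/N^\delta}$ is a bounded unitary operator, so that ${\cal R}_\ell^\delta$ is well defined on the domain of the monomials involved.
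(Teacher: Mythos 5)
Your argument is correct and follows essentially the same route as the paper's proof: termwise bounding of the series, Cauchy--Schwarz against the bounded $A^{(N)}$, a Wick-theorem pairing count of order $(4k+2n-1)!!$ for the Gaussian moments, and the observation that $\sqrt{(4k+2n-1)!!}/k!$ grows only geometrically, so the tail starting at $k=\ell$ is $O(N^{-\ell\delta})$ once $N^\delta$ exceeds the geometric ratio. The only cosmetic difference is that you work directly with the centered variable $Y=X-\langle X\rangle$ and the moment $\langle B^\dagger Y^{2k}B\rangle$, whereas the paper first expands $(X-\langle X\rangle)^k$ by the binomial theorem and estimates $\langle A^{(N)}X^mB\rangle$; both lead to the same estimate and the same largeness condition on $N$.
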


\begin{proof}
Using the definition (\ref{A.3.1}) and bounding the modulus of the sum with the sums of the moduli, one can write:
\begin{equation}
\left|\expec{A^{(N)}\ {\cal R}_\ell^\delta\ B}\right|\le \sum_{k=\ell}^{\infty}\frac{1}{k!}
\left(\frac{1}{N^{\delta}}\right)^{k}\left|\expec{A^{(N)}\ \big(X-\expec{X}\big)^k \ B}\right|\ .
\label{A.4}
\end{equation}
Further, using the binomial theorem, the modulus inside the sum can be bounded as follows:
\begin{equation}
\Big|\expec{A^{(N)}\ \big(X-\expec{X}\big)^k\ B}\Big|\le\sum_{m=0}^{k}\binom{k}{m}\Big|\expec{X}^{k-m}\Big|\ 
\Big|\expec{A^{(N)}\ X^m\ B}\Big|\, .
\label{A.5}
\end{equation}
Since by hypothesis $X$ is a polynomial of degree two in the canonical variables, it can be expanded as 
a linear combination of monomials $q_i$ of degree two in $\{x_1,p_1,x_2,p_2\}$,
$X=\sum_{i=1}^d c_i\,  q_i$, with $c_i$ real coefficients. As a consequence, one can then write:
$$
\expec{A^{(N)}\, X^m\,B}=\sum_{i_1,i_2,\dots,i_{m}=1}^d c_{i_1}c_{i_2}\dots c_{i_m}
\expec{A^{(N)}\, q_{i_1}q_{i_2}\dots q_{i_m}\ B}\, .
$$
Since also $B$ is a monomial of degree $n$ in the canonical variables, the entire
product $\mathcal{P}\equiv q_{i_1}q_{i_2}\dots q_{i_m}\ B$ is itself a (not ordered) monomial of
degree $2m+n$. Further, one can bound:
$$
\left|\expec{A^{(N)}) \, q_{i_1}q_{i_2}\dots q_{i_m}\ B}\right|=
\left|\expec{A^{(N)}) \, {\cal P}}\right|
\le \|A^{(N)}\|\sqrt{\expec{\mathcal{P}^\dagger \mathcal{P}}}.
$$
Now, $\mathcal{P}^\dagger \mathcal{P}$ is actually a product of $2(2m+n)$ elements of the set
$\{x_1,p_1,x_2,p_2\}$; therefore, its expectation on the Gaussian state $\rho$ can be
expressed in terms of sums of products of two-point correlation functions through Wick's theorem.
By calling $M$ the maximum of the modulus of all two-point functions, one can then estimate:
$$
\expec{ \mathcal{P}^\dagger \mathcal{P} }\le (2(2m+n)-1)!!\ M^{2m+n}\ ,
$$
since Wick's decomposition involve precisely $(2(2m+n)-1)!!$ terms.
Collecting these results, one can now write:
$$
\left|\expec{A^{(N)}\, X^m\,B}\right|\le\|A^{(N)}\|\, M^{n/2}\, C^m\, \sqrt{(2(2m+n)-1)!!}\ , \qquad C=d\, M\, |c|\, ,
$$
with $|c|=\max\left\{|c_i|\right\}$. 
Inserting this in (\ref{A.5}), and recalling (\ref{A.4}), one can now write:
\begin{equation}
\begin{split}
\left|\expec{A^{(N)}\ {\cal R}_\ell^\delta\ B}\right|&\le \|A^{(N)}\|\, M^{n/2}\sum_{k=\ell}^\infty\frac{C^{k}}{k!}\left(\frac{1}{N^{\delta}}\right)^{k}\sum_{m=0}^{k}\binom{k}{m}\sqrt{(2(2m+n)-1])!}\left|\langle X\rangle\right|^{k-m}\\
&\le \|A^{(N)}\|\, M^{n/2}
\sum_{k=\ell}^\infty\frac{(C')^{k}}{k!}\left(\frac{1}{N^{\delta}}\right)^{k}\sqrt{[2(2k+n)-1]!!}\ ,
\end{split}
\end{equation}
where $C'=C(1+\langle X\rangle)$.
At this point, in order to prove the large-$N$ behaviour stated in (\ref{A.3.2}), 
it is sufficient to show that 
$$
N^{\ell\delta}\left|\expec{A^{(N)}\ {\cal R}_\ell^\delta\ B}\right|<\infty\, ,
$$
or equivalently that the series
$$
\sum_{k=\ell}^\infty\frac{(C')^{k}}{k!}\left(\frac{1}{N^{\delta}}\right)^{k-\ell}\sqrt{(2(2k+n)-1)!!}\ ,
$$
converges; using the ratio test, this is ensured by the condition $4C'/N^\delta <1$, {\it i.e.}
for $N$ large enough.
\end{proof}

\subsection{Weyl operators}

In this Section we consider the large-$N$ limit of the Weyl-like operators $W^{(N)}(\vec{r}\,)$
defined in (\ref{1.26}) and show that they behave as true Weyl operators. As discussed in the main text,
this is guaranteed by the following {\sl Lemma}:

\bigskip
\noindent
{\bf Lemma 2.\ }{\it Given the state $\rho^{(N)}$ in (\ref{1.3}), and two distinct single-site operators $X_{r_1}$, $X_{r_2}$ belonging
to the real linear span $\mathcal{X}$ in (\ref{1.24}), one has
$$
\lim_{N\to\infty}\expec{A^{(N)}\,\left(W^{(N)}(\vec{r_1})\, W^{(N)}(\vec{r_2})-
W^{(N)}(\vec{r_1}+\vec{r_2})\ \e^{-\frac{1}{2}\expec{\big[X_{r_1}, X_{r_2}\big]}}\right)}_N=0\ ,
$$
for any bounded element $A^{(N)}$
in the oscillator algebra ${\cal A}^{(N)}$, {\it i.e.} $\left\|A^{(N)}\right\|<\infty$, $\forall\,N$.}
\begin{proof}
Let us define 
$$
\Delta^{(N)}=W^{(N)}(\vec{r_1})\, W^{(N)}(\vec{r_2})-
W^{(N)}(\vec{r_1}+\vec{r_2})\ \e^{-\frac{1}{2}\expec{\big[X_{r_1}, X_{r_2}\big]}}\ ,
$$
and then focus on the $\rho^{(N)}$ expectation value in the above limit. 
Using the Cauchy-Schwarz inequality, its modulus can be bounded as
\begin{equation}
\big|\expec{A^{(N)}\, \Delta^{(N)}}_N\big|\le 
\left\|A^{(N)}\right\|\sqrt{\expec{\big(\Delta^{(N)}\big)^\dagger \Delta^{(N)}}_N}\ ,
\label{A.7}
\end{equation}
where, explicitly:
$$
\expec{\big(\Delta^{(N)}\big)^\dagger \Delta^{(N)}}_N=
2-2Re\left\{\e^{-\frac{1}{2}\expec{\big[X_{r_1}, X_{r_2}\big]}}\Big\langle{W^{(N)}(\vec{-r_2})\, W^{(N)}(\vec{-r_1})\,
W^{(N)}(\vec{r_1}+\vec{r_2})}\Big\rangle_N \right\}\ .
$$
Further, recalling again the property (\ref{1.7}) of the state $\rho^{(N)}$, one can write:
\begin{eqnarray}
\nonumber
&&\Big\langle{W^{(N)}(\vec{-r_2})\, W^{(N)}(\vec{-r_1})\, W^{(N)}(\vec{r_1}+\vec{r_2})}\Big\rangle_N=\\
&&\hskip 5cm \expec{\e^{-i\frac{X_{r_2}-\expec{X_{r_2}}}{\sqrt{N}}}\, \e^{-i\frac{X_{r_1}-\expec{X_{r_1}}}{\sqrt{N}}}\,
\e^{i\frac{X_{r_1}+X_{r_2}-\expec{X_{r_1}+X_{r_2}}}{\sqrt{N}}}}^N\ .
\label{1.32-1}
\end{eqnarray}
We need now study the large-$N$ limit of the single site expectation on the r.h.s. of this relation.
As in the proof {\sl Lemma 1} in Section 2, any single-site exponential 
$\e^{i\frac{X_r-\expec{X_r}}{\sqrt{N}}}$ can be expanded as in (\ref{1.28}):
$$
\e^{i\frac{X_r-\expec{X_r}}{\sqrt{N}}}=Q^{(N)}(X_r)+{\cal R}_r^{(N)}\ ,
$$
with 
\begin{equation*}
Q^{(N)}(X_r)=1+\frac{i}{\sqrt{N}}\big(X_r-\expec{X_r}\big)+\frac{i^2}{2N}\big(X_r-\expec{X_r}\big)^2\ ,
\end{equation*}
and ${\cal R}_r^{(N)}$ as in (\ref{1.29}). By expanding the rightmost exponential in (\ref{1.32-1}), one can write:
\begin{eqnarray}
\nonumber
&&\Big\langle{W^{(N)}(\vec{-r_2})\, W^{(N)}(\vec{-r_1})\, W^{(N)}(\vec{r_1}+\vec{r_2})}\Big\rangle_N =\\
&&\hskip 4cm
\expec{\e^{-i\frac{X_{r_2}-\expec{X_{r_2}}}{\sqrt{N}}}\, \e^{-i\frac{X_{r_1}-\expec{X_{r_1}}}{\sqrt{N}}}\,
\Big(Q^{(N)}(X_{r_1 + r_2})+{\cal R}_{r_1 + r_2}^{(N)}\Big)}^N\ .
\nonumber
\end{eqnarray}
Using the results of the previous {\sl Lemma 3} in Section 6.2, one shows that for large $N$ one has: 
$$
\Bigg|\expec{\e^{-i\frac{X_{r_2}-\expec{X_{r_2}}}{\sqrt{N}}}\, \e^{-i\frac{X_{r_1}-\expec{X_{r_1}}}{\sqrt{N}}}\,
\, {\cal R}_{r_1 + r_2}^{(N)}}\Bigg|= O(N^{-3/2})\, ,
$$
and therefore
\begin{eqnarray}
\nonumber
&&\Big\langle{W^{(N)}(\vec{-r_2})\, W^{(N)}(\vec{-r_1})\, W^{(N)}(\vec{r_1}+\vec{r_2})}\Big\rangle_N=\\
&&\hskip 4cm 
\expec{\e^{-i\frac{X_{r_2}-\expec{X_{r_2}}}{\sqrt{N}}}\, \e^{-i\frac{X_{r_1}-\expec{X_{r_1}}}{\sqrt{N}}}\,
\, Q^{(N)}(X_{r_1 + r_2})+ O(N^{-3/2})}^N\ .
\nonumber
\end{eqnarray}
Repeating recursively the same procedure also for the remaining two exponentials, by using again the
results of {\sl Lemma 3}, one finally gets:
\begin{eqnarray}
\nonumber
&&\Big\langle{W^{(N)}(\vec{-r_2})\, W^{(N)}(\vec{-r_1})\, W^{(N)}(\vec{r_1}+\vec{r_2})}\Big\rangle_N=\\
&&\hskip 4cm
\Big\langle Q^{(N)}(-X_{r_2})\, Q^{(N)}(-X_{r_1})\, Q^{(N)}(X_{r_1 + r_2})\, +\, O(N^{-3/2})\Big\rangle^N\ .
\nonumber
\end{eqnarray}
Expanding the product of the $Q^{(N)}$'s, and keeping only the lowest terms in $1/N$, one is left with 
$$
\Big\langle{W^{(N)}(\vec{-r_2})\, W^{(N)}(\vec{-r_1})\, W^{(N)}(\vec{r_1}+\vec{r_2})}\Big\rangle_N=
\lim_{N\to\infty}\expec{1+\frac{\big\langle[X_{r_1},X_{r_2}]\big\rangle}{2N}}^N=
\e^{\frac{1}{2}\expec{\big[X_{r_1}, X_{r_2}\big]}}\ ,
$$
showing that 
$$
\lim_{N\to\infty}\expec{\big(\Delta^{(N)}\big)^\dagger \Delta^{(N)}}_N=\, 0\ .
$$
Since $A^{(N)}$ is by assumption a bounded operator for any $N$, from (\ref{A.7}) the thesis of the {\sl Lemma} follows.
\end{proof}

\subsection{Large-$N$ behaviour of $\Phi_t^{(N)}\big[{\cal R}_r^{(N)}]$}

In this Section we shall give an estimate on the dissipative time evolution of the rest
${\cal R}_r^{(N)}$, needed in the proof of {\sl Theorem 2} reported in the next Section. 
In analogy with the discussion
in Section 6.2 above, we shall prove a slightly more general result,
given by the following {\sl Lemma}.

\begin{lemma}
Given a zero-average Gaussian state $\rho$, an homogeneous polynomial $X$ of degree two in the canonical variables
$\{x_1,p_1,x_2,p_2\}$ and a generator $\mathbb{L}$ of a quantum dynamical semigroup, 
at most quadratic in the previous canonical variables, the sum
\begin{equation}
{\cal R}_\ell^\delta=\sum_{k=\ell}^\infty\frac{i^k}{k!}\left(\frac{1}{N^{\delta}}\right)^k\big(X-\expec{X}\big)^k\ ,
\qquad \delta>0\ ,\ \ell\in\mathbb{N}\ ,
\label{A.9}
\end{equation}
is such that 
\begin{equation}
\left|\expec{\e^{i\frac{Y-\expec{Y}}{\sqrt{N}}}\ \e^{t\mathbb{L}}\left[{\cal R}_\ell^\delta\right]\ B^{(N)}}\right|
=O\Big( N^{-\ell\,\delta}\Big)\ ,
\label{A.10}
\end{equation}
for $N$ large enough, where $Y$ is a real, homogeneous quadratic polynomial in the canonical variables, 
while $B^{(N)}$ is any operator in the oscillator algebra such that  
\hbox{$ \langle B^{(N)}{}^\dagger\, B^{(N)}\rangle <\infty,\,\forall\, N$.}
\end{lemma}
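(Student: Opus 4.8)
The plan is to follow the structure of the proof of \emph{Lemma 3} in Section 6.2, treating the extra operator $\e^{t\mathbb{L}}$ as an additional, controllable ingredient. Expanding the series (\ref{A.9}) and using the triangle inequality, it suffices to bound, for each $k\ge\ell$, the quantity
\begin{equation*}
\frac{1}{k!}\,\frac{1}{N^{\delta k}}\,\Big|\expec{\e^{i\frac{Y-\expec{Y}}{\sqrt{N}}}\ \e^{t\mathbb{L}}\big[(X-\expec{X})^k\big]\ B^{(N)}}\Big|\ ,
\end{equation*}
and then to show that the resulting sum over $k$ converges with leading term of order $N^{-\ell\delta}$. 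The left factor $\e^{i(Y-\expec{Y})/\sqrt{N}}$ is unitary, hence bounded with unit norm, so it plays the role of $A^{(N)}$ in \emph{Lemma 3}; the essential novelty is to understand what $\e^{t\mathbb{L}}$ does to the polynomial $(X-\expec{X})^k$.

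First I would record that, since $\mathbb{L}$ is at most quadratic in the canonical variables, it is degree--non--increasing on polynomials: the Hamiltonian part $i[H,\cdot]$ preserves the polynomial degree, while in each dissipative term the top--degree contributions of $V_\alpha Z V_\beta$ and $\tfrac12\{V_\alpha V_\beta,Z\}$ cancel, lowering the degree by at least two. Consequently $\e^{t\mathbb{L}}$ maps polynomials of degree $\le 2k$ into polynomials of degree $\le 2k$, so $\e^{t\mathbb{L}}\big[(X-\expec{X})^k\big]$ is again a polynomial of degree $\le 2k$. The point that needs care — and which I expect to be the \emph{main obstacle} — is to bound its coefficients \emph{geometrically} in $k$: a crude estimate via the operator norm of $\mathbb{L}$ restricted to the degree $\le 2k$ polynomials grows like $\e^{c\,t\,k^{2}}$ and would destroy the convergence of the series. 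To circumvent this I would exploit the \emph{quasi-free} character of the semigroup generated by a quadratic $\mathbb{L}$: its Heisenberg action sends Weyl operators into multiples of Weyl operators, $\e^{t\mathbb{L}}\big[\widehat{W}(\vec v)\big]=g_t(\vec v)\,\widehat{W}(M_t^{T}\vec v)$, with a Gaussian prefactor $g_t$ and a matrix $M_t$ finite for each fixed $t$ (the same structure exhibited in (\ref{2.9}) for the mesoscopic maps). Differentiating this identity with respect to the components of $\vec v$ expresses $\e^{t\mathbb{L}}$ acting on a symmetrised degree-$p$ monomial as a sum over partial pairings, each singleton contributing a linear form $M_t\vec{R}$ and each pair a bounded noise contraction. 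Since $\|M_t\|$ and the noise kernel are finite for fixed $t$, this yields, for $p=2k$, a polynomial of degree $\le 2k$ whose coefficients are bounded by $(\mathrm{const}_t)^{k}$ times combinatorial factors — precisely the geometric control required.

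With this in hand the estimate mirrors \emph{Lemma 3}. Setting $A^{(N)}=\e^{i(Y-\expec{Y})/\sqrt{N}}$ and $\mathcal{P}_k=\e^{t\mathbb{L}}\big[(X-\expec{X})^k\big]$, the Cauchy--Schwarz inequality in the GNS inner product of $\rho$ gives
\begin{equation*}
\Big|\expec{A^{(N)}\,\mathcal{P}_k\,B^{(N)}}\Big|\le
\sqrt{\expec{A^{(N)}\,\mathcal{P}_k\,\mathcal{P}_k^{\dagger}\,A^{(N)\,\dagger}}}\ \sqrt{\expec{B^{(N)\,\dagger}\,B^{(N)}}}\ ,
\end{equation*}
where the second factor is finite by hypothesis and uniformly bounded in $N$. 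Because $Y$ is quadratic, conjugation by the Gaussian unitary $A^{(N)}$ acts linearly on the canonical variables through a symplectic matrix $\e^{\Lambda_Y/\sqrt{N}}$ whose norm is bounded uniformly in $N$; hence $A^{(N)}\mathcal{P}_k\mathcal{P}_k^{\dagger}A^{(N)\,\dagger}$ is again a polynomial, now of degree $\le 4k$, with coefficients still geometric in $k$. Evaluating its expectation on the zero-average Gaussian state $\rho$ by Wick's theorem bounds it by $(4k-1)!!\,M^{2k}$ times a geometric factor, with $M$ the largest two-point function, exactly as in \emph{Lemma 3}. Collecting the factors, the $k$-th term of the series carries the $k$-dependent prefactor $\tfrac{1}{k!}(\mathrm{const}_t)^{k}\sqrt{(4k-1)!!}$ multiplying $N^{-\delta k}$; the ratio test then shows the series converges once $N$ is large enough (a condition of the form $4\,\mathrm{const}_t/N^{\delta}<1$) and that it is dominated by its $k=\ell$ term, which is of order $N^{-\ell\delta}$. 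This establishes (\ref{A.10}).
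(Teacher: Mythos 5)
Your overall architecture (series expansion, triangle inequality, Cauchy--Schwarz against $B^{(N)}$, Wick counting, ratio test) matches the paper's, but at the crucial step you take a different route, and that route has a gap. You propose to control $\e^{t\mathbb{L}}\big[(X-\expec{X})^k\big]$ directly as a polynomial, with coefficients bounded by $(\mathrm{const}_t)^k$ ``times combinatorial factors''. Those combinatorial factors are the problem: differentiating the quasi-free Weyl identity $2k$ times produces a sum over \emph{partial pairings} of $2k$ slots, and the number of such pairings grows like $\sqrt{(2k)!}$, not geometrically. If you feed that coefficient bound into the Wick estimate for the degree $\le 4k$ polynomial $\mathcal{P}_k\mathcal{P}_k^{\dagger}$ as you describe (``$(4k-1)!!\,M^{2k}$ times a geometric factor''), the $k$-th term of the series actually carries an extra factor of order $(2k)!$ inside the square root, i.e.\ roughly an extra $2^{k}k!$ after taking the root; the resulting series $\sum_k (8C)^k N^{-\delta k}\, k!$ diverges for every $N$, so the ratio test fails. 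The estimate can in principle be rescued by counting the partial pairings from the action of $\e^{t\mathbb{L}}$ \emph{jointly} with the subsequent Wick pairings --- slots paired early lower the degree left for Wick, and the combined count is still $O\big(C^k (4k-1)!!\big)$ --- but that bookkeeping is precisely the missing content, and as written you assert geometric control where there is none.

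The paper sidesteps this difficulty entirely. After Cauchy--Schwarz it invokes Schwartz-positivity of the unital completely positive map, $\e^{t\mathbb{L}}[X^m]\,\e^{t\mathbb{L}}[X^m]\le \e^{t\mathbb{L}}[X^{2m}]$, so that one never has to represent $\e^{t\mathbb{L}}[X^m]$ as an explicit polynomial; it then observes that the functional $\expec{\e^{i(Y-\expec{Y})/\sqrt{N}}\;\e^{t\mathbb{L}}[\,\cdot\,]\;\e^{-i(Y-\expec{Y})/\sqrt{N}}}$ is itself a zero-average Gaussian state with covariance bounded uniformly in $N$ (and in $t$ on compacts), because the dynamics is quasi-free and conjugation by the Gaussian unitary is symplectic. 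The moment bound then reduces verbatim to the Wick count of \emph{Lemma 3} in Section 6.2, with no coefficient growth to track. I would recommend either adopting that route or supplying the joint pairing count explicitly.
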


\begin{proof}
Let us start by considering the expectation:
$$
\expec{\e^{i\frac{Y-\expec{Y}}{\sqrt{N}}}\ \e^{t\mathbb{L}}\left[{\cal R}_\ell^\delta\right]\ B^{(N)}}=\sum_{k=\ell}^{\infty}\frac{i^k}{k!}\left(\frac{1}{N^\delta}\right)^k\expec{\e^{i\frac{Y-\expec{Y}}{\sqrt{N}}}
\e^{t\mathbb{L}}\left[\big(X-\expec{X}\big)^k\right]B^{(N)}}\ ,
$$
whose modulus can then be bounded as:
$$
\left|\expec{\e^{i\frac{Y-\expec{Y}}{\sqrt{N}}}\ \e^{t\mathbb{L}}\left[{\cal R}_\ell^\delta\right]\ B^{(N)}}\right|\le\sum_{k=\ell}^{\infty}\frac{1}{k!}\left(\frac{1}{N^\delta}\right)^k\left|\expec{\e^{i\frac{Y-\expec{Y}}{\sqrt{N}}}\e^{t\mathbb{L}}\left[\big(X-\expec{X}\big)^k\right]B^{(N)}}\right|\ .
$$
Let us then focus on the expectation inside the infinite sum; with the help of the binomial theorem, one obtains:
$$
\left|\expec{\e^{i\frac{Y-\expec{Y}}{\sqrt{N}}}\e^{t\mathbb{L}}\left[\left(X-\expec{X}\right)^k\right]B^{(N)}}\right|\le\sum_{m=0}^k\binom{k}{m}\left|\expec{X}^{k-m}\right|\left|\expec{\e^{i\frac{Y-\expec{Y}}{\sqrt{N}}}\e^{t\mathbb{L}}\left[X^m\right]B^{(N)}}\right|\ .
$$
Using the Cauchy-Schwarz inequality and (\ref{2.13-2}), one can further write
$$
\left|\expec{\e^{i\frac{Y-\expec{Y}}{\sqrt{N}}}\e^{t\mathbb{L}}\left[X^m\right]B^{(N)}}\right|\le
\langle B^{(N)}{}^\dagger\, B^{(N)}\rangle
\expec{\e^{i\frac{Y-\expec{Y}}{\sqrt{N}}}\e^{t\mathbb{L}}\left[X^{2m}\right]
\e^{-i\frac{Y-\expec{Y}}{\sqrt{N}}}}^{1/2}\ ,
$$
with $\langle B^{(N)}{}^\dagger\, B^{(N)}\rangle$ bounded for any $N$ by assumption.
Further, recalling that $Y$ is a sum of monomials of degree $2$ and that the evolution $e^{t\mathbb{L}}$ is quasi-free,
the functional 
$$
\expec{e^{i\frac{Y-\expec{Y}}{\sqrt{N}}}e^{t\mathbb{L}}\left[\ \cdot\ \right]e^{-i\frac{Y-\expec{Y}}{\sqrt{N}}}}
\equiv\expec{\ \cdot\ }_{t,\frac{1}{\sqrt{N}}}\, ,
$$
defines a zero-average Gaussian state on the oscillator algebra,
whose covariance matrix is bounded for any $N$ and for any $t$ belonging to a compact interval.
The two-point functions with respect to this Gaussian state are thus bounded and therefore
one can now proceed exactly as in the proof of {\sl Lemma 2} in Section 6.2 above,
obtaining the convergence condition:
$$
\expec{\e^{i\frac{Y-\expec{Y}}{\sqrt{N}}}\e^{t\mathbb{L}}\left[X^{2m}\right]
\e^{-i\frac{Y-\expec{Y}}{\sqrt{N}}}}^{1/2}\le \, C^m\, \sqrt{(4m-1)!!}\ ,
$$
with a suitable constant $C$. Inserting this result in the above chain of inequalities, 
with steps completely analogous to those used in Section 6.2, one finally
derives the following asymptotic behaviour
$$
\left|\expec{\e^{i\frac{Y-\expec{Y}}{\sqrt{N}}}\ \e^{t\mathbb{L}}\left[{\cal R}_\ell^\delta\right]\ B^{(N)})}\right|
=O\Big( N^{-\ell\delta}\Big)\ ,
$$
valid for $N$ large enough.
\end{proof}

\subsection{Proof of Theorem 2}

In this Section we shall give a prove of the mesoscopic limit
$$
m-\lim_{N\to\infty}\Phi^{(N)}_t\Big[W^{(N)}(\vec{r}\,)\Big]=\Phi_t\left[W(\vec{r}\,)\right]\ ,
$$
which is the key result of {\sl Theorem 2}.

\begin{proof}
As explained at the end of Section 2, the above mesoscopic limit actually means
\begin{equation}
\lim_{N\to\infty} \Big\langle W^{(N)}(\vec{r}_1)\,\Phi^{(N)}_t\big[W^{(N)}(\vec{r}\,)\big]\,W^{(N)}(\vec{r}_2)\Big\rangle_N=
\Big\langle W(\vec{r}_1)\,\Phi_t\big[W(\vec{r}\,)\big]\, W(\vec{r}_2)\Big\rangle_\Omega\ ,
\label{2.14}
\end{equation}
for all $\vec{r},\ \vec{r}_{1},\ \vec{r}_{2}\in \mathbb{R}^6$, and this is precisely what needs to be proven.

Let us first consider the r.h.s. of (\ref{2.14}); using the commutation relations (\ref{1.34}) for Weyl operators
and the results of {\sl Lemma 1}, one can rewrite:
\begin{equation}
\Big\langle W(\vec{r}_1)\,\Phi_t\big[W(\vec{r}\,)\big]\, W(\vec{r}_2)\Big\rangle_\Omega = \e^{\frac{1}{2}(Z_t+iY_t)}\ ,
\label{2.15}
\end{equation}
with
\begin{equation}
\begin{split}
&Z_t=(\vec{r}+\vec{r}_1+\vec{r}_2)\cdot \Sigma_\beta \cdot (\vec{r}+\vec{r}_1+\vec{r}_2)
-\vec{r}_t\cdot \mathcal{K}_t \cdot\vec{r}_t\ ,\\
&Y_t=\vec{r}_1\cdot\sigma\cdot\vec{r}_t+\vec{r}_t\cdot\sigma\cdot\vec{r}_2+\vec{r}_1\cdot \sigma\cdot\vec{r}_2\ .
\end{split}
\label{2.16}
\end{equation}
Therefore, in order to prove the Theorem, one should retrieve the same result from the limiting procedure 
on the l.h.s. of (\ref{2.14}).

Recalling the properties (\ref{1.7}) and (\ref{1.9}) of the state $\rho^{(N)}$, and the definitions
(\ref{1.25}) and (\ref{1.26}) of the fluctuations $F^{(N)}(X_r)$ and the corresponding
Weyl-like operators $W^{(N)}(\vec{r}\,)$, one has:
\begin{eqnarray}
\nonumber
&&\lim_{N\to\infty} \Big\langle W^{(N)}(\vec{r}_1)\,\Phi^{(N)}_t\big[W^{(N)}(\vec{r}\,)\big]\,
W^{(N)}(\vec{r}_2)\Big\rangle_N=\\
&&\hskip 2cm \lim_{N\to\infty}\left(\expec{\e^{\frac{i}{\sqrt{N}}\big(X_{r_1}-\langle X_{r_1}\rangle\big)} 
e^{t\mathbb{L}^{(N)}}\left[\e^{\frac{i}{\sqrt{N}}\big(X_{r}-\langle X_{r}\rangle\big)}\right]
\e^{\frac{i}{\sqrt{N}}\big(X_{r_2}-\langle X_{r_2}\rangle\big)}}\right)^N\ ,
\label{2.17}
\end{eqnarray}
where, recalling (\ref{1.23}), (\ref{1.24}), $X_r$, $X_{r_1}$ and $X_{r_2}$ are sums of monomials of degree 2 
in the canonical variables. 
Let us then focus on the single-site expectation on the r.h.s. of (\ref{2.17}).
By expanding the last exponential, one can write:
$$
\e^{i\frac{X_{r_2}-\expec{X_{r_2}}}{\sqrt{N}}}=Q^{(N)}(X_{r_2})+{\cal R}_{r_2}^{(N)}\ ,
$$
with 
\begin{equation}
Q^{(N)}(X_{r_2})=1+\frac{i}{\sqrt{N}}\big(X_{r_2}-\expec{X_{r_2}}\big)+\frac{i^2}{2N}\big(X_{r_2}-\expec{X_{r_2}}\big)^2\ ,
\label{2.18}
\end{equation}
and ${\cal R}_{r_2}^{(N)}$ as in (\ref{1.29}),
\begin{equation}
{\cal R}_{r_2}^{(N)}=\sum_{k=3}^\infty\frac{i^k}{k!\,\big(\sqrt{N}\big)^k}\big(X_{r_2}-\expec{X_{r_2}}\big)^k\, .
\label{2.19}
\end{equation}
Using the results of {\sl Lemma 3} in Section 6.2, for $N$ large, one can then write:
\begin{eqnarray*}
&&\expec{\e^{\frac{i}{\sqrt{N}}\big(X_{r_1}-\langle X_{r_1}\rangle\big)} 
e^{t\mathbb{L}^{(N)}}\left[\e^{\frac{i}{\sqrt{N}}\big(X_{r}-\langle X_{r}\rangle\big)}\right]
\e^{\frac{i}{\sqrt{N}}\big(X_{r_2}-\langle X_{r_2}\rangle\big)}}=\\
&&\hskip 3cm 
\expec{\e^{\frac{i}{\sqrt{N}}\big(X_{r_1}-\langle X_{r_1}\rangle\big)} 
e^{t\mathbb{L}^{(N)}}\left[\e^{\frac{i}{\sqrt{N}}\big(X_{r}-\langle X_{r}\rangle\big)}\right]
\ Q^{(N)}(X_{r_2})} + O\big(N^{-3/2} \big)\ .
\end{eqnarray*}
By expanding as in (\ref{2.18}) also the middle exponential containing $X_r$, one further obtains:
\begin{eqnarray*}
&&\expec{\e^{\frac{i}{\sqrt{N}}\big(X_{r_1}-\langle X_{r_1}\rangle\big)} 
e^{t\mathbb{L}^{(N)}}\left[\e^{\frac{i}{\sqrt{N}}\big(X_{r}-\langle X_{r}\rangle\big)}\right]
\e^{\frac{i}{\sqrt{N}}\big(X_{r_2}-\langle X_{r_2}\rangle\big)}}=\\
&&\hskip 3cm 
\expec{\e^{\frac{i}{\sqrt{N}}\big(X_{r_1}-\langle X_{r_1}\rangle\big)} 
e^{t\mathbb{L}^{(N)}}\left[Q^{(N)}(X_{r})\right]
\ Q^{(N)}(X_{r_2})} + O\big(N^{-3/2} \big)\ ,
\end{eqnarray*}
since, in the large-$N$ limit, by {\sl Lemma 4} in Section 6.4, 
$\Phi^{(N)}_t\big[ {\cal R}_{r}^{(N)}\big]$ 
gives also contributions of order $1/N^{3/2}$ for any $t\geq 0$. Finally, the expansion of the last
exponential containing $X_{r_1}$ yields:
\begin{eqnarray}
\nonumber
&&\lim_{N\to\infty} \Big\langle W^{(N)}(\vec{r}_1)\,\Phi^{(N)}_t\big[W^{(N)}(\vec{r})\big]\,
W^{(N)}(\vec{r}_2)\Big\rangle_N=\\
&&\hskip 2cm \lim_{N\to\infty}\left(\expec{Q^{(N)}(X_{r_1})\ 
e^{t\mathbb{L}^{(N)}}\left[Q^{(N)}(X_{r})\right]\ Q^{(N)}(X_{r_2}) + O\big(N^{-3/2} \big)
}\right)^N\ .
\label{2.20}
\end{eqnarray}
Recalling the result (\ref{2.11}), one further gets:
\begin{equation*}
e^{t\mathbb{L}^{(N)}}\left[X_{r}\right] = \vec{r}\cdot \mathcal{M}_t\cdot \vec{X}\equiv X_r(t)\ ,
\end{equation*}
and therefore
\begin{equation*}
e^{t\mathbb{L}^{(N)}}\left[Q^{(N)}(X_{r})\right] = 
1+\frac{i}{\sqrt{N}}\big(X_{r}(t)-\expec{X_{r}(t)}\big)+
\frac{i^2}{2N}e^{t\mathbb{L}^{(N)}}\left[\big(X_{r}-\expec{X_{r}}\big)^2\right]\ .
\end{equation*}
With the help of this result and recalling the definition (\ref{2.18}),
by using the shorthand notation $\tilde O=O-\expec{O}$, one finds:
\begin{eqnarray}
\nonumber
&&\expec{Q^{(N)}(X_{r_1})\ 
e^{t\mathbb{L}^{(N)}}\left[Q^{(N)}(X_{r})\right]\ Q^{(N)}(X_{r_2}) }=\\
&& \hskip 4 cm 1 +\frac{i}{\sqrt{N}}\expec{ \tilde{X}_{r_1} + \tilde{X}_r(t) + \tilde{X}_{r_2} }\\
&& \hskip 4 cm -\frac{1}{2N} \expec{ \big(\tilde{X}_{r_1}\big)^2 +
e^{t\mathbb{L}^{(N)}}\left[ \big(\tilde{X}_{r}\big)^2\right] +
\big(\tilde{X}_{r_2}\big)^2 } \\
&& \hskip 4 cm -\frac{1}{N} \expec{
\tilde{X}_{r_1}\, \tilde{X}_{r}(t) + \tilde{X}_{r}(t)\, \tilde{X}_{r_2} +\tilde{X}_{r_1}\,\tilde{X}_{r_2} }
+O\big(N^{-3/2}\big)\ ,
\label{2.21}
\end{eqnarray}
where only the significant orders in $1/N$ are kept. In the above expansion, the
terms scaling as $1/\sqrt N$ are clearly identically zero. Further, recalling the definition (\ref{1.30})
for the covariance matrix $\Sigma_\beta$ and the time invariance of the state, one gets:
$$
\expec{ \big(\tilde{X}_{r_1}\big)^2 }=\vec{r}_1\cdot\Sigma_\beta\cdot\vec{r}_1\ ,\qquad
\expec{e^{t\mathbb{L}^{(N)}}\left[ \big(\tilde{X}_{r}\big)^2\right]}=\vec{r}\cdot\Sigma_\beta\cdot\vec{r}\ ,\qquad
\expec{ \big(\tilde{X}_{r_2}\big)^2 }=\vec{r}_2\cdot\Sigma_\beta\cdot\vec{r}_2\ .
$$
In addition, one easily sees that:
\begin{eqnarray*}
&&\expec{\tilde{X}_{r_1}\, \tilde{X}_{r}(t)}=\vec{r}_1\cdot\left(\Sigma_\beta+\frac{i}{2}\sigma\right)\cdot \vec{r}_t\ ,\\  
&&\expec{\tilde{X}_{r}(t)\, \tilde{X}_{r_2}}=\vec{r}_t\cdot\left(\Sigma_\beta+\frac{i}{2}\sigma\right)\cdot \vec{r}_2\ ,\\
&&\expec{\tilde{X}_{r_1}\, \tilde{X}_{r_2}}=\vec{r}_1\cdot\left(\Sigma_\beta+\frac{i}{2}\sigma\right)\cdot \vec{r}_2\ ,
\end{eqnarray*}
which can be obtained by recalling the expectations of the commutator (\ref{1.32}) and anticommutator (\ref{1.30})
of the single-site operators $X_r$ defined in (\ref{1.23}).
Taking into account that $\Sigma_\beta$ is a symmetric matrix, one can recast the r.h.s. of (\ref{2.17}) as:
\begin{equation*}
\lim_{N\to\infty} \Big\langle W^{(N)}(\vec{r}_1)\,\Phi^{(N)}_t\big[W^{(N)}(\vec{r}\,)\big]\,
W^{(N)}(\vec{r}_2)\Big\rangle_N=
\lim_{N\to\infty}\left(1-\frac{Z_t+iY_t}{2N}\right)^N = \e^{\frac{1}{2}(Z_t+iY_t)}\ ,
\end{equation*}
with $Z_t$ and $Y_t$ as in (\ref{2.16}).
\end{proof}

\vskip 2cm


\end{document}